\newif\if@restonecol
\def\url@mystyle{%
  %\@ifundefined{selectfont}{\def\UrlFont{\sf}}{\def\UrlFont{\small\ttfamily}}}
  \@ifundefined{selectfont}{\def\UrlFont{\mathtt}}{\def\UrlFont{}}}
\begin{document}
\title{Joint Probability Distribution of Prediction Errors of ARIMA}

% \author{Xin Qin}
% \affiliation{University of Southern California}
% \email{xinqin@usc.edu}
%
% \author{Jyotirmoy V. Deshmukh}
% \affiliation{University of Southern California}
% \email{jyotirmoy.deshmukh@usc.edu}

\author{Xin Qin , Jyotirmoy V. Deshmukh  \\ Email: $\mathtt{\{xinqin,jyotirmoy.deshmukh\}@usc.edu}$}

\institute{University of Southern
California}
\maketitle

\begin{abstract}
Producing probabilistic guarantee for several steps of a predicted signal follow a temporal logic defined behavior has its rising importance in monitoring. In this paper, we derive a method to compute the joint probability distribution of prediction errors of multiple steps based on Autoregressive Integrated Moving Average(ARIMA) model. We cover scenarios in stationary process and intrinsically stationary process for univariate and multivariate.

\end{abstract}

\keywords{Probabilistic Reasoning, ARIMA}

\section{Introduction}
Signal Temporal Logic(STL) specify the simultaneous behavior of a signal across different time point, while time series prediction based on Autoregressive Integrated Moving Average (ARIMA) model only give guarantee on each single time point. In this paper, we give the proof of calculating joint probability distribution of prediction errors of multiple steps.

\section{Error of Prediction in Univariate Process}
The goal is to get the joint distribution of
$$
\{Err(X_{n+1}),Err(X_{n+2}),\dots ,Err(X_{n+h})\}
$$
\subsection{Error of Stationary Process}
% which have been mentioned in Theorem \ref{EerrorZero} and Theorem \ref{Varerror}:
Previous work \cite{StatTexBook2} \cite{StatTexBook} concluded that for stationary process
\begin{equation*}
\begin{split}
E[Err(X_{n+h})]&=0 \\
E[Err(X_{n+h})]^2&=\gamma_X(0)-(a_n^h)^\top\gamma_n(h)
\end{split}
\end{equation*}

Assuming that $Err(X_{n+h})$ is a normal distribution, we have $Err(X_{n+h})\sim N(0,\gamma_X(0)-(a_n^h)^\top\gamma_n(h))$.

\begin{definition}The best linear prediction of $X_{n+h}$ is\cite{StatTexBook}:
\begin{equation}
P(X_{n+h}\mid X_n,X_{n-1},\dots ,X_1,1)=a_0^h+\sum_{i=1}^na_i^hX_{n+1-i}
\end{equation}

The optimized choice of $a_i^h$ are
\begin{equation}\label{bestchoice}
\begin{pmatrix}
Cov(X_1,X_1) & Cov(X_1,X_2) & \dots & Cov(X_1,X_n)\\
Cov(X_2,X_1) & Cov(X_2,X_2) & \dots & Cov(X_2,X_n)\\
\vdots & \vdots & \vdots & \vdots\\
Cov(X_n,X_1) & Cov(X_n,X_2) & \dots & Cov(X_n,X_n)
\end{pmatrix}
\begin{pmatrix}
a_1 \\ a_2 \\ \vdots\\ a_n
\end{pmatrix}=
\begin{pmatrix}
\gamma_X(h) \\ \gamma_X(h+1) \\ \vdots \\ \gamma_X(n+h-1)
\end{pmatrix}
\end{equation}
and
\begin{equation}
a_0=\mu_X(1-\sum_{i=1}^na_i)
\end{equation}
where $\gamma_X(h)=Cov(X_i,X_{i+h})$
\end{definition}

\begin{definition}\label{ErrorDef}
\begin{equation}
		Err(X_{n+h})=P_nX_{n+h}-X_{n+h}
\end{equation}
\end{definition}

\begin{definition}\label{PredictorPropertyAdd}
For best linear predictor \cite{StatTexBook}
%using property 4\cite{introduction to time series and forecast} of best linear predictor %%%%(cite introduction to time series and forecast page59)
	\begin{equation*}
P_n(\alpha_1U+\alpha_2V+\beta)=\alpha_1P_n(U)+\alpha_2P_n(V)+\beta
\end{equation*}

\end{definition}

\begin{corollary}\label{ObservedEReal}
The value of best linear predictor at already observed datapoint $\{X_1,X_2,\dots,X_n\}$ follows
\begin{equation*}
P_n(\sum_{i=1}^n\alpha_iX_i+\beta)=\sum_{i=1}^n\alpha_iX_i+\beta
\end{equation*}

\end{corollary}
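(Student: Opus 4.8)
The plan is to reduce the statement to the single‑variable claim $P_n(X_i)=X_i$ for each already observed index $i\in\{1,\dots,n\}$, and then assemble the general case using the linearity recorded in Definition~\ref{PredictorPropertyAdd}. Once these two ingredients are in place, the corollary is immediate.

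First I would extend the two‑term additivity of Definition~\ref{PredictorPropertyAdd} to an arbitrary finite linear combination by a routine induction on the number of summands, absorbing the additive constant $\beta$ only at the final step. This yields
\begin{equation*}
P_n\!\left(\sum_{i=1}^n \alpha_i X_i + \beta\right)=\sum_{i=1}^n \alpha_i\,P_n(X_i)+\beta ,
\end{equation*}
so the whole corollary follows once the core claim $P_n(X_i)=X_i$ is established for every $i\le n$. The core step I would handle by appealing to the defining optimality of the best linear predictor: $P_n(Y)$ is the element of the span of $\{1,X_1,\dots,X_n\}$ that minimizes the mean‑squared prediction error $E[(Y-P_n(Y))^2]$. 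When the target is $Y=X_i$ with $i\le n$, the candidate $X_i$ itself already lies in this span and produces zero error, which is the global minimum; by uniqueness of the minimizer (guaranteed when the covariance matrix in~\eqref{bestchoice} is nonsingular) we conclude $P_n(X_i)=X_i$. Equivalently, one verifies directly that taking the coefficient of $X_i$ equal to $1$, all other coefficients equal to $0$, and $a_0=0$ solves the normal equations, since the resulting error is identically zero and hence orthogonal to every predictor.

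The only delicate point is this core step, and its real content is just the interpretation of $P_n$ as an $L^2$‑projection onto the span of the observed data together with uniqueness of the best linear predictor; the surrounding manipulations are purely algebraic. Combining the extended linearity with $P_n(X_i)=X_i$ gives
\begin{equation*}
P_n\!\left(\sum_{i=1}^n \alpha_i X_i + \beta\right)=\sum_{i=1}^n \alpha_i X_i + \beta ,
\end{equation*}
which is the desired identity. I expect the induction and the projection argument to be short, so the main obstacle is merely stating the nonsingularity hypothesis cleanly so that uniqueness of $P_n$ can be invoked without circularity.
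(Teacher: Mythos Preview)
The paper states Corollary~\ref{ObservedEReal} without proof, so there is no argument to compare against; your proposal supplies exactly the kind of justification the paper omits. Your reduction via Definition~\ref{PredictorPropertyAdd} to the claim $P_n(X_i)=X_i$ for $i\le n$, followed by the observation that $X_i$ already lies in the predictor span and attains zero mean-squared error, is correct and is the standard way to see this. The only caveat is that your remark about nonsingularity of the covariance matrix in~\eqref{bestchoice} is stronger than necessary here: even if the Gram matrix is singular, the \emph{predictor} $P_n(X_i)$ (as a random variable) is still unique because any two minimizers differ by an element of the span with zero second moment, hence coincide almost surely; only the coefficient vector $(a_1,\dots,a_n)$ may fail to be unique. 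So the projection argument goes through without that hypothesis.
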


\begin{definition}\label{lineartransform}
If the distribution of $X=\begin{pmatrix}X_1 \\ X_2 \\ \vdots \\X_n \end{pmatrix}$ is a multivariate normal distribution, then the random variable vector $Y=CX+b$, where $C$ is a $r\times n$ matrix, is also a multivariate normal distribution.And if $\{X_t\}\sim N_n(\mu,\Sigma)$, then $\{Y_t\}\sim N_r(C\mu+b,C\Sigma C^\top)$
\end{definition}

\begin{definition}
	
	If $\{Z_t\}\sim WN(0,\sigma^2)$ are independent white noise, the joint distribution of
\begin{equation*}
\begin{pmatrix}
Z_{n-q+1}\\
Z_{n-q+2}\\
\vdots\\
Z_{n+h}
\end{pmatrix}
\end{equation*}
is a multivariate normal distribution with
\begin{equation}
%\begin{array}{c}
\mu_Z=\begin{pmatrix}
0\\
0\\
\vdots\\
0\end{pmatrix}\quad
\Sigma_Z=\begin{pmatrix}
\sigma^2 & 0 & \dots & 0\\
0 & \sigma^2 & \dots & 0\\
\vdots & \vdots & \ddots & 0\\
0 & 0 & \dots & \sigma^2
\end{pmatrix}
%\end{array}
\end{equation}

\end{definition}

%\begin{equation}
%Err(\mathbf{X}_h)=
%\begin{pmatrix}
%Err(X_{T+1})\\
%Err(X_{T+2})\\
%\vdots\\
%Err(X_{T+h})
%\end{pmatrix}
%\end{equation}

\begin{lemma}\label{C1}
Given $Y_t=\theta(B)Z_t\quad {t\geq p+1}$,
let $\mathbf{Y}_{h}$ denote  $[Y_{n + i}]_{i = 1}^h$, then
	\begin{equation}\label{Z2Y}
\mathbf{Y}_h=C_1\begin{pmatrix}
Z_{n-q+1}\\
Z_{n-q+2}\\
\vdots\\
Z_{n+h}
\end{pmatrix}
\end{equation}
where
$C_1$ is a $[h\times(q+h)]$ transform matrix from $\begin{pmatrix}
Z_{n-q+1}\\
Z_{n-q+2}\\
\vdots\\
Z_{n+h}
\end{pmatrix}$ to $\mathbf{Y}_h$.
\begin{equation}\label{Z2Ytransformmatrix}
C_1=\begin{pmatrix}
\theta_q & \theta_{q-1} & \dots & \theta_1 & 1 & 0 & 0 & \dots & 0\\
0 & \theta_q & \theta_{q-1} & \dots & \theta_1 & 1 & 0 & \dots & 0\\
0 & 0 & \theta_q & \theta_{q-1} & \dots & \theta_1 & 1 & \dots & 0\\
\vdots & \vdots & \vdots & \ddots & \ddots & \ddots & \ddots & \ddots & \vdots\\
0 & 0 & 0 & \dots & \theta_q & \theta_{q-1} & \dots & \theta_1 & 1
\end{pmatrix}
\end{equation}
\end{lemma}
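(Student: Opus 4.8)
The plan is to derive the matrix $C_1$ directly by expanding the moving-average operator and reading off coefficients. Recall that $\theta(B) = 1 + \theta_1 B + \cdots + \theta_q B^q$ with $B$ the backshift operator and $\theta_0 = 1$, so the hypothesis $Y_t = \theta(B)Z_t$ unfolds into
\begin{equation*}
Y_{n+i} = \sum_{k=0}^{q}\theta_k Z_{n+i-k}, \qquad i = 1,\dots,h.
\end{equation*}
Each $Y_{n+i}$ is therefore a finite linear combination of noise terms, and the claim reduces to showing that stacking these $h$ combinations into $\mathbf{Y}_h$ yields the banded Toeplitz matrix of Equation \eqref{Z2Ytransformmatrix} acting on the stacked noise vector.

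First I would fix the indexing of the noise vector, writing its $j$-th component as $Z_{n-q+j}$ for $j = 1,\dots,q+h$ (there are exactly $(n+h)-(n-q+1)+1 = q+h$ of them). Matching $n+i-k = n-q+j$ gives $j = q+i-k$, equivalently $k = q+i-j$, so the coefficient of the $j$-th noise component in $Y_{n+i}$ is $\theta_{q+i-j}$ whenever $0 \le q+i-j \le q$ and $0$ otherwise. Hence the $(i,j)$ entry of the transform matrix is $\theta_{q+i-j}$ on the band $i \le j \le q+i$ and zero elsewhere, which I would verify matches Equation \eqref{Z2Ytransformmatrix} row by row: row $i=1$ reads $(\theta_q,\theta_{q-1},\dots,\theta_1,1,0,\dots,0)$, and each subsequent row shifts this pattern one column to the right, exactly as displayed.

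The step requiring the most care is confirming that the fixed noise vector $(Z_{n-q+1},\dots,Z_{n+h})^\top$ contains precisely the noise terms appearing across all $h$ predictions, with none missing and none extraneous. The smallest index arises in $Y_{n+1}$, namely $Z_{n+1-q}=Z_{n-q+1}$, and the largest arises in $Y_{n+h}$, namely $Z_{n+h}$; thus the combined support is exactly the interval $[n-q+1,\,n+h]$, which matches the chosen vector and forces the dimensions $[h\times(q+h)]$. I would also observe that the condition $t \ge p+1$ is what guarantees the representation $Y_t=\theta(B)Z_t$ holds over the entire prediction window, so no boundary term falls outside the regime where the expansion is valid. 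Once the index range is pinned down in this way, the identification of $C_1$ is immediate and the lemma follows.
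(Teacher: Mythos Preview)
Your proof is correct and follows exactly the same approach as the paper: both expand $Y_t=\theta(B)Z_t$ as $Z_t+\theta_1 Z_{t-1}+\cdots+\theta_q Z_{t-q}$ and read off the coefficients. Your version is simply more explicit about the index bookkeeping (the $(i,j)$-entry formula and the support check), whereas the paper stops after writing the expansion.
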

\begin{proof}

From the definition of ARMA(p,q) process,
\begin{equation}
\begin{split}
	Y_t&=\theta(B)Z_t\quad {t\geq p+1} \\
		&=Z_t+\theta_1Z_{t-1}+\theta_2Z_{t-2}+\dots +\theta_qZ_{t-q}
\end{split}
\end{equation}
where $Z_t \sim WN(0,\sigma^2)$.

\end{proof}

\begin{lemma}\label{C3C2}
	\begin{equation}
P_n\mathbf{Y}_{n+h}=C_3C_2\begin{pmatrix}
Z_{p-q+1}\\
Z_{p-q+2}\\
\vdots\\
Z_{n}
\end{pmatrix}
\end{equation}
where $C_2$ is a $[(n-p)\times (n-p+q)]$ matrix, $C_3$ is a $[h \times h]$ matrix.
\begin{equation}
C_2=\begin{pmatrix}
\theta_q & \theta_{q-1} & \dots & \theta_1 & 1 & 0 & 0 & \dots & 0\\
0 & \theta_q & \theta_{q-1} & \dots & \theta_1 & 1 & 0 & \dots & 0\\
0 & 0 & \theta_q & \theta_{q-1} & \dots & \theta_1 & 1 & \dots & 0\\
\vdots & \vdots & \vdots & \ddots & \ddots & \ddots & \ddots & \ddots & \vdots\\
0 & 0 & 0 & \dots & \theta_q & \theta_{q-1} & \dots & \theta_1 & 1
\end{pmatrix}
\end{equation}
\begin{equation}
C_3=\begin{pmatrix}
a_{n-p}^1 & a_{n-p-1}^1 & \dots & a_1^1 \\
a_{n-p}^2 & a_{n-p-1}^2 & \dots & a_1^2 \\
\vdots & \vdots & \ddots & \vdots \\
a_{n-p}^h & a_{n-p-1}^h & \dots & a_1^h
\end{pmatrix}
\end{equation}
\end{lemma}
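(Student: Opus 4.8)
The plan is to factor the prediction operator into two linear maps and identify each factor with one of the stated matrices. Write $\mathbf{Z} = (Z_{p-q+1}, \dots, Z_n)^\top$ and let $\mathbf{Y}_{\mathrm{obs}} = (Y_{p+1}, \dots, Y_n)^\top$ collect the observations admitting the moving-average representation. The goal is to prove
\begin{equation*}
P_n\mathbf{Y}_{n+h} = C_3\,\mathbf{Y}_{\mathrm{obs}} = C_3 C_2\,\mathbf{Z},
\end{equation*}
so that the lemma follows immediately by composing the two steps.

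First I would establish the right-hand factor $\mathbf{Y}_{\mathrm{obs}} = C_2\mathbf{Z}$. This is exactly Lemma \ref{C1} specialized to the index window $t = p+1, \dots, n$: expanding $Y_t = \theta(B)Z_t = Z_t + \theta_1 Z_{t-1} + \cdots + \theta_q Z_{t-q}$ row by row produces the banded Toeplitz matrix displayed as $C_2$, whose lowest-index column corresponds to $Z_{p-q+1}$ (the oldest noise term needed to form $Y_{p+1}$) and whose size is therefore $(n-p)\times(n-p+q)$. No argument beyond re-reading Lemma \ref{C1} with shifted endpoints is required, since the MA coefficient pattern is identical.

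Next I would establish the left factor $P_n\mathbf{Y}_{n+h} = C_3\,\mathbf{Y}_{\mathrm{obs}}$. For each horizon $j \in \{1,\dots,h\}$ the best linear predictor is, by the prediction definition, a fixed affine combination $P_n Y_{n+j} = a_0^j + \sum_{i=1}^{n-p} a_i^j Y_{n+1-i}$ of the observations, with the coefficients $a_i^j$ fixed by the normal equations \eqref{bestchoice}; under the zero-mean convention the intercept $a_0^j$ vanishes. Stacking these $h$ rows, with $Y_{n+1-i}$ ranging over $Y_n, \dots, Y_{p+1}$ as $i$ runs from $1$ to $n-p$, reproduces precisely the entries of $C_3$ read left to right as $a_{n-p}^j, \dots, a_1^j$. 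Corollary \ref{ObservedEReal} can be invoked to confirm that $P_n$ acts as the identity on these already-observed linear combinations, so that the expression carries no residual prediction error.

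The main obstacle is the index bookkeeping that makes the composition $C_3 C_2$ both type-correct and semantically correct: one must check that the columns of $C_3$ and the rows of $C_2$ index the same window $Y_{p+1}, \dots, Y_n$, so the shared inner dimension is $n-p$, and one must justify restricting the predictor to the $n-p$ observations that admit the MA representation (the initial values $Y_1, \dots, Y_p$ being absorbed into initial conditions rather than into the predicting window). I would also note that the stated size $[h\times h]$ of $C_3$ should read $[h\times(n-p)]$, as its displayed entries require, since otherwise the product $C_3 C_2$ is undefined; with this correction $C_3 C_2$ is an $h\times(n-p+q)$ matrix carrying $\mathbf{Z}$ to the $h$-vector $P_n\mathbf{Y}_{n+h}$, which completes the proof.
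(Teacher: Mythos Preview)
Your proposal is correct and follows essentially the same approach as the paper: the paper's proof is extremely terse, merely observing that $\{Y_t\}$ is a zero-mean stationary process so $P_nY_{n+h}=\sum_{i=1}^{n-p}a_i^hY_{n+1-i}$ and identifying $C_3$ as the coefficient matrix of $(Y_{p+1},\dots,Y_n)$, while you spell out both factors explicitly. Your observation that the stated size $[h\times h]$ of $C_3$ should be $[h\times(n-p)]$ is a genuine correction to a typo in the paper.
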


\begin{proof}
	$\{Y_t\}$ is a stationary process starting from $p+1$ with zero mean, the prediction of $\{Y_t\}$ follows
\begin{equation}
P_nY_{n+h}=a_0^h+\sum_{i=1}^{n-p}a_i^hY_{n+1-i}
\end{equation}
where $a_0^h=0$. C3
is the coefficient matrix of $(Y_{p+1},Y_{p+2},\dots,Y_n)$ when calculating $P_n\mathbf{Y}_h$, the best linear predictor of $\{Y_{n+1},Y_{n+2},\dots,Y_{n+h}\}$, a stationary MA(q) process with zero mean.
\end{proof}

\begin{theorem}
	The covariance matrix of prediction error of a moving average process $Y$ is
\begin{equation}
\Sigma_{ErrY}=C_{ZtoErrY}\Sigma_ZC_{ZtoErrY}^\top
\end{equation}
\end{theorem}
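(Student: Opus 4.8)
The plan is to express the entire prediction-error vector as a single linear transform of the underlying white-noise vector, and then to invoke the linear-transform property of the multivariate normal distribution (Definition~\ref{lineartransform}) to read off its covariance. The two ingredients already assembled in Lemma~\ref{C1} and Lemma~\ref{C3C2} are precisely what is needed: they write the future values and their best linear predictions, respectively, as linear functions of white noise.

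First, following Definition~\ref{ErrorDef} applied componentwise, I would introduce the error vector
\begin{equation*}
\mathbf{Err}_Y = P_n\mathbf{Y}_{h} - \mathbf{Y}_{h},
\end{equation*}
whose $i$-th entry is $Err(Y_{n+i})$. By Lemma~\ref{C1}, the term $\mathbf{Y}_h$ equals $C_1$ applied to the white-noise block $(Z_{n-q+1},\dots,Z_{n+h})^\top$, and by Lemma~\ref{C3C2} the term $P_n\mathbf{Y}_h$ equals $C_3C_2$ applied to the block $(Z_{p-q+1},\dots,Z_n)^\top$. The next step is to place both expressions over one common white-noise vector $Z$ indexed over the union of these two ranges, from $Z_{p-q+1}$ through $Z_{n+h}$. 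This is done by padding $C_1$ and $C_3C_2$ with zero columns in the index positions each matrix does not use. Subtracting the padded matrices yields a single transform matrix $C_{ZtoErrY}$ such that
\begin{equation*}
\mathbf{Err}_Y = C_{ZtoErrY}\, Z .
\end{equation*}

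Finally, since $Z$ is a centered multivariate normal vector with covariance $\Sigma_Z$ by the white-noise independence assumption, Definition~\ref{lineartransform} applies directly with offset $b=0$. It gives that $\mathbf{Err}_Y$ is itself multivariate normal with mean $C_{ZtoErrY}\cdot 0 = 0$ and covariance $\Sigma_{ErrY} = C_{ZtoErrY}\Sigma_Z C_{ZtoErrY}^\top$, which is exactly the claimed identity.

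I expect the main obstacle to be the index bookkeeping in the zero-padding step rather than anything analytically deep. The two lemmas deliver their matrices over different, only partially overlapping white-noise ranges, so care is required to align the columns of the padded $C_1$ and the padded $C_3C_2$ so that each white-noise term $Z_j$ occupies the same column of $C_{ZtoErrY}$ before the subtraction is performed. Once that column alignment is fixed, the construction of $C_{ZtoErrY}$ and the conclusion via the normal linear-transform rule are immediate.
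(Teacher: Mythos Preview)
Your proposal is correct and matches the paper's own proof essentially step for step: express $P_n\mathbf{Y}_h$ and $\mathbf{Y}_h$ via Lemmas~\ref{C3C2} and~\ref{C1}, zero-pad so both act on the common noise vector $(Z_{p-q+1},\dots,Z_{n+h})^\top$, subtract to obtain $C_{ZtoErrY}$, and apply Definition~\ref{lineartransform}. The only cosmetic difference is that the paper pads $C_2$ (writing $C_2^*=(C_2\ \ O)$) and then multiplies by $C_3$, whereas you pad the product $C_3C_2$ directly; since $C_3(C_2\ \ O)=(C_3C_2\ \ O)$, these are identical.
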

\begin{proof}
%Let $\mathbf{X}_{h}$ denote $\begin{pmatrix} X_{T+1}\\ X_{T+2}\\ \vdots\\ X_{T+h}\end{pmatrix}$,
%	 $P_n(\mathbf{X}_{h})$ denote $\begin{pmatrix} P_n(X_{T+1})\\ P_n(X_{T+2})\\ \vdots\\ P_n(X_{T+h})\end{pmatrix}$,
%	
%	 $Err(\mathbf{X}_{h})$ denote $\begin{pmatrix} Err(X_{T+1})\\ Err(X_{T+2})\\ \vdots\\ Err(X_{T+h})\end{pmatrix}$,
%	 $\mathbf{Y}_{h}$ denote $\begin{pmatrix} Y_{T+1}\\ Y_{T+2}\\ \vdots\\ Y_{T+h}\end{pmatrix}$,
%	  $P_n(\mathbf{Y}_{h})$ denote  $\begin{pmatrix} P_n(Y_{T+1})\\ P_n(Y_{T+2})\\ \vdots\\ P_n(Y_{T+h})\end{pmatrix}$,
%	  $Err(\mathbf{Y}_{h})$ denote  $\begin{pmatrix} Err(Y_{T+1})\\ Err(Y_{T+2})\\ \vdots\\ Err(X_{Y+h})\end{pmatrix}$
%%

Let $\mathbf{X}_{h}$ denote $[X_{n + i}]_{i = 1}^h$,
$P_n\mathbf{X_h}$ denote $[P_nX_{n+i}]_{i=1}^h$,
$Err(\mathbf{X}_h)$ denote $[Err(X_{n+i})]_{i=1}^h$,
	$\mathbf{Y}_{h}$ denote  $[Y_{n + i}]_{i = 1}^h$,
	$Err(\mathbf{Y}_h)$ denote $[Err(Y_{n+i})]_{i=1}^h$

Error is defined as
\begin{equation}
Err(Y_{n+h})=P_nY_{n+h}-Y_{n+h}
\end{equation}
From lemma \ref{C1} and \ref{C3C2}
\begin{equation}\label{PminersX}
Err(\mathbf{Y}_h)=C_3C_2\begin{pmatrix}
Z_{p-q+1}\\
Z_{p-q+2}\\
\vdots\\
Z_{n}
\end{pmatrix}
-C_1\begin{pmatrix}
Z_{n-q+1}\\
Z_{n-q+2}\\
\vdots\\
Z_{n+h}
\end{pmatrix}
\end{equation}
After augment the coefficient matrix $C_1$ and $C_2$ to $C_1^*$ and $C_2^*$ by
\begin{equation}
\begin{array}{c}
C_1^*=\left(O_{h\times (n-p)} \quad C_1\right)\\
C_2^*=\left(C_2 \quad O_{(n-p)\times h}\right)
\end{array}
\end{equation}
equation \ref{PminersX} is transformed to
\begin{equation}\label{Z2Error}
\begin{split}
Err(\mathbf{Y}_h)=&C_3C_2^*\begin{pmatrix}
Z_{p-q+1}\\
Z_{p-q+2}\\
\vdots\\
Z_{n+h}
\end{pmatrix}
-C_1^*\begin{pmatrix}
Z_{p-q+1}\\
Z_{p-q+2}\\
\vdots\\
Z_{n+h}
\end{pmatrix}\\
=&(C_3C_2^*-C_1^*)\begin{pmatrix}
Z_{p-q+1}\\
Z_{p-q+2}\\
\vdots\\
Z_{n+h}
\end{pmatrix}
\end{split}
\end{equation}
Let $C_{ZtoErrY}$ denote $C_3C_2^*-C_1^*$. Follow Lemma \ref{lineartransform},  $Err(\mathbf{Y}_h)\sim N_h(0,\Sigma_{ErrY})$, where
\begin{equation}
\Sigma_{ErrY}=C_{ZtoErrY}\Sigma_ZC_{ZtoErrY}^\top
\end{equation}
where the size of $\Sigma_Z$ here is $(n-p+q)\times(n-p+q)$.
\end{proof}

\begin{theorem}\label{stationaryconclusion}
$Err(\mathbf{X}_h)$ is a multivariate normal distribution $Err(\mathbf{X}_h)\sim N_h(0,\Sigma_{ErrX})$ in ARMA(p,q) process $\{X_t\}$, using the best linear predictor $P_nX_{n+h}$.
The covariance matrix of prediction error is
\begin{equation}
\Sigma_{ErrX} = C_{ErrYtoErrX}C_{ZtoErrY}\Sigma_ZC_{ZtoErrY}^\top C_{ErrYtoErrX}^\top
\end{equation}
	
	 where
\begin{equation}
C_{ErrYtoErrX}=\begin{pmatrix}
1 & 0 & 0 & \dots & 0\\
c_{21} & 1 & 0 & \dots & 0 \\
\vdots & \vdots & \ddots & \ddots & \vdots \\
c_{h1} & c_{h2} & c_{h3} & \dots & 1
\end{pmatrix}
\end{equation}
in which $c_{ij}$ is the coefficient of $\sigma Err(Y_{n+i})$ when calculating $Err(X_{n+j})$.

$c_{ij}$ can be recursively given by:
For any ($i \geq 2$),
\begin{equation}
\begin{split}
c_{i,1}=\sum_{k=1}^{min(p,h-1)}\phi_kc_{i-k,1}
\end{split}
\end{equation}

For any ($j \geq 2$),
\begin{equation}
c_{i,j}=c_{i-1,j-1}
\end{equation}

\end{theorem}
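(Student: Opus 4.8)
The plan is to reduce this ARMA statement to the moving-average case already settled in the preceding theorem, by exhibiting $Err(\mathbf{X}_h)$ as a fixed linear image of $Err(\mathbf{Y}_h)$. Writing the ARMA($p,q$) relation $\phi(B)X_t=\theta(B)Z_t$ in the form $Y_t = X_t - \sum_{k=1}^{p}\phi_k X_{t-k}$ (so that $Y_t=\theta(B)Z_t$ is exactly the MA($q$) process of Lemma~\ref{C1}), I would apply the best linear predictor $P_n$ term by term. By the linearity in Definition~\ref{PredictorPropertyAdd}, $P_nY_{n+i}=P_nX_{n+i}-\sum_{k=1}^{p}\phi_k P_nX_{n+i-k}$; subtracting $Y_{n+i}$ written the same way gives
\begin{equation*}
Err(Y_{n+i})=Err(X_{n+i})-\sum_{k=1}^{p}\phi_k\,Err(X_{n+i-k}).
\end{equation*}

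The key simplification is that whenever $i-k\le 0$ the index $n+i-k$ lies at or before the last observation $X_n$, so by Corollary~\ref{ObservedEReal} its predictor is exact and $Err(X_{n+i-k})=0$. Hence the sum truncates to $k\le\min(p,i-1)$, and we obtain a \emph{unit lower-triangular} relation sending $Err(\mathbf{X}_h)$ to $Err(\mathbf{Y}_h)$. Inverting this triangular system (the inverse of a unit lower-triangular matrix is again unit lower-triangular, which forces the diagonal of $C_{ErrYtoErrX}$ to be $1$) expresses each error as $Err(X_{n+i})=Err(Y_{n+i})+\sum_{k=1}^{\min(p,i-1)}\phi_k\,Err(X_{n+i-k})$, and unrolling this recursion in terms of the $Err(Y)$'s produces the entries $c_{ij}$. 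I would treat the scalar $\sigma$ appearing in the statement as the white-noise normalization carried through from $\Sigma_Z$; being a common factor it does not affect the triangular structure.

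The main technical work is verifying the two recurrences for $c_{ij}$. Substituting the expansion $Err(X_{n+i})=\sum_{j\le i}c_{ij}\,Err(Y_{n+j})$ into the recursion above and matching the coefficient of each $Err(Y_{n+j})$ gives the general rule $c_{ij}=\sum_{k=1}^{\min(p,i-1)}\phi_k\,c_{i-k,j}$ together with $c_{ii}=1$; specializing to $j=1$ is precisely the stated first-column recurrence. The harder claim is the Toeplitz identity $c_{ij}=c_{i-1,j-1}$ for $j\ge 2$: I expect to prove by induction on $i-j$ that $c_{ij}$ depends only on the difference $i-j$, these common values being exactly the $\psi$-weights solving $\psi_m=\sum_{k}\phi_k\psi_{m-k}$ with $\psi_0=1$ and $\psi_m=0$ for $m<0$. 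Establishing this shift-invariance — rather than any analytic estimate — is the crux, since it reduces the whole matrix to its first column propagated down the diagonals.

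Finally, with $Err(\mathbf{X}_h)=C_{ErrYtoErrX}\,Err(\mathbf{Y}_h)$ established and $Err(\mathbf{Y}_h)\sim N_h(0,\Sigma_{ErrY})$ known from the preceding theorem, Definition~\ref{lineartransform} immediately yields that $Err(\mathbf{X}_h)$ is multivariate normal with mean $C_{ErrYtoErrX}\cdot 0=0$ and covariance $C_{ErrYtoErrX}\Sigma_{ErrY}C_{ErrYtoErrX}^\top$. Substituting $\Sigma_{ErrY}=C_{ZtoErrY}\Sigma_Z C_{ZtoErrY}^\top$ gives the claimed
\begin{equation*}
\Sigma_{ErrX}=C_{ErrYtoErrX}C_{ZtoErrY}\Sigma_Z C_{ZtoErrY}^\top C_{ErrYtoErrX}^\top,
\end{equation*}
which completes the proof.
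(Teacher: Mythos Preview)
Your proposal is correct and follows essentially the same route as the paper: write $X_{n+h}=Y_{n+h}+\sum_{k}\phi_k X_{n+h-k}$, apply linearity of $P_n$ together with Corollary~\ref{ObservedEReal} to obtain the recursion $Err(X_{n+h})=Err(Y_{n+h})+\sum_{k=1}^{\min(p,h-1)}\phi_k\,Err(X_{n+h-k})$, rewrite this as $Err(\mathbf{X}_h)=C_{ErrYtoErrX}\,Err(\mathbf{Y}_h)$, and then invoke Definition~\ref{lineartransform} and the preceding theorem. If anything, you are more careful than the paper in actually deriving the Toeplitz identity $c_{ij}=c_{i-1,j-1}$ by identifying the $c_{ij}$ with the $\psi$-weights depending only on $i-j$, whereas the paper simply asserts the two recurrences.
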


\begin{proof}
From ARIMA we know
\begin{equation}\label{Ytdefinition}
Y_t=\phi(B)X_t \quad t>max(p,q)
\end{equation}
Then from definition \ref{Ytdefinition}
\begin{equation}\label{definitionXt}
X_{n+h}=Y_{n+h}+\sum_{i=1}^p\phi_iX_{n+h-i}
\end{equation}

From definition \ref{PredictorPropertyAdd}
\begin{equation}
\begin{split}
P_nX_{n+h}&=P_n(Y_{n+h}+\sum_{i=1}^p\phi_iX_{n+h-i})\\
			&=P_nY_{n+h}+\sum_{i=1}^p\phi_iP_nX_{n+h-i}
\end{split}
\end{equation}

From corollary \ref{ObservedEReal}
\begin{equation}
P_nX_{n+h}=\left\{
\begin{array}{lcl}
P_nY_{n+h}+\sum\limits_{i=1}^{h-1}\phi_iP_nX_{n+h-i}+\sum\limits_{i=h}^p\phi_iX_{n+h-i}&&{h\leq p}\\
&&\\
P_nY_{n+h}+\sum\limits_{i=1}^{p}\phi_iP_nX_{n+h-i}&&{h>p}
\end{array}
\right.
\end{equation}

Then we have
\begin{equation}
\begin{split}
Err(X_{n+h})&=P_nX_{n+h}-X_{n+h}\\
&=P_nY_{n+h}+\sum\limits_{i=1}^{h-1}\phi_iP_nX_{n+h-i}+\sum\limits_{i=h}^p\phi_iX_{n+h-i}\\
&-Y_{n+h}-\sum_{i=1}^p\phi_iX_{n+h-i}\\
=&Err(Y_{n+h})+\sum\limits_{i=1}^{min(p,(h-1))}\phi_iErr(X_{n+h-i})
\end{split}
\end{equation}

which can be represent in matrix form
\begin{equation}
Err(\mathbf{X}_h)=C_{ErrYtoErrX}Err(\mathbf{Y}_h)
\end{equation}
Let $C_{ErrYtoErrX}$ denote
\begin{equation}
\begin{pmatrix}
1 & 0 & 0 & \dots & 0 \\
c_{21} & 1 & 0 & \dots & 0 \\
\vdots & \vdots & \vdots & \dots & 0\\
c_{h1} & c_{h2} & c_{h3} & \dots & 1
\end{pmatrix}
\end{equation}

Using Lemma \ref{lineartransform}, we have $Err(\mathbf{X}_h)\sim N_h(0,\Sigma_ErrX)$ where
\begin{equation}
\begin{split}
\Sigma_{ErrX}&=C_{ErrYtoErrX}\Sigma_{ErrY}C_{ErrYtoErrX}^\top\\
=&C_{ErrYtoErrX}C_{ZtoErrY}\Sigma_ZC_{ZtoErrY}^\top C_{ErrYtoErrX}^\top
\end{split}
\end{equation}

Q.E.D.
\end{proof}

From definition \ref{ErrorDef},
\begin{equation}
X_{n+h}\in [c_1,c_2] \Leftrightarrow Err(X_{n+h})\in [P_nX_{n+h}-c_2,P_nX_{n+h}-c_1]
\end{equation}
which lead to the following conclusion:

\begin{theorem}
\begin{equation*}
\begin{split}
&P(X_{n+h}\in [c_1,c_2])\\
=&P(Err(X_{n+h})\in [P_nX_{n+h}-c_2,P_nX_{n+h}-c_1])
\end{split}
\end{equation*}
\end{theorem}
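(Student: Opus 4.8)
The plan is to prove the statement purely at the level of events in the underlying probability space, by showing that the two events $\{X_{n+h}\in[c_1,c_2]\}$ and $\{Err(X_{n+h})\in[P_nX_{n+h}-c_2,\,P_nX_{n+h}-c_1]\}$ are in fact the \emph{same} subset of the sample space; once that identity of events is established, equality of their probabilities is immediate. The only input needed is Definition~\ref{ErrorDef}, together with the observation that $P_nX_{n+h}$, being the best linear predictor built from the already-observed data $X_1,\dots,X_n$, is a deterministic quantity and hence a fixed constant with respect to the randomness carried by $X_{n+h}$. This makes the endpoints $P_nX_{n+h}-c_2$ and $P_nX_{n+h}-c_1$ genuine real numbers, so the second interval is well defined.

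First I would invoke Definition~\ref{ErrorDef} to write $X_{n+h}=P_nX_{n+h}-Err(X_{n+h})$, then substitute this into the membership condition $c_1\le X_{n+h}\le c_2$. A short chain of equivalences then yields the biconditional already displayed immediately before the theorem: starting from
\begin{equation*}
c_1\le P_nX_{n+h}-Err(X_{n+h})\le c_2,
\end{equation*}
subtracting $P_nX_{n+h}$ throughout and multiplying by $-1$ (which reverses both inequalities) gives
\begin{equation*}
P_nX_{n+h}-c_2\le Err(X_{n+h})\le P_nX_{n+h}-c_1.
\end{equation*}
Every step here is a reversible algebraic manipulation, so the two conditions hold for exactly the same sample points, establishing the claimed event equality $\{X_{n+h}\in[c_1,c_2]\}=\{Err(X_{n+h})\in[P_nX_{n+h}-c_2,\,P_nX_{n+h}-c_1]\}$.

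Finally, applying the probability measure to both sides of this set identity gives the stated equality. If one would rather not rely on the displayed biconditional, the same conclusion follows by a change-of-variables argument: $Err(X_{n+h})$ is an affine, hence measurable and invertible, transformation of $X_{n+h}$ for fixed $P_nX_{n+h}$, and the preimage of the interval $[P_nX_{n+h}-c_2,\,P_nX_{n+h}-c_1]$ under this map is precisely $[c_1,c_2]$. Honestly, there is no substantive obstacle in this theorem; the only point deserving care is the direction of the inequality flip under multiplication by $-1$ (which is what swaps $c_1$ and $c_2$ between the two intervals) and the implicit treatment of $P_nX_{n+h}$ as a constant rather than a random variable. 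Both are routine once made explicit.
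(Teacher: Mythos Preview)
Your proposal is correct and follows exactly the same route as the paper: the paper simply records the biconditional $X_{n+h}\in[c_1,c_2]\Leftrightarrow Err(X_{n+h})\in[P_nX_{n+h}-c_2,\,P_nX_{n+h}-c_1]$ (derived from Definition~\ref{ErrorDef}) immediately before the theorem and lets the probability equality follow. Your write-up is more explicit about the algebra and the role of $P_nX_{n+h}$ as a fixed quantity, but the substance is identical.
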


Now we have a guarantee for any prediction interval of a time step $n+h$, denote the event $(X_{n+h}\in [c_{h,1},c_{h,2}])$ as event $A_h$, then the joint probability of $(\bigcup_{i=1}^hA_i)$ can be calculated by using theorem \ref{stationaryconclusion} when $\{X_t\}$ is a stationary process.

\subsection{Error of Intrinsically Stationary Process}
\begin{definition}
When the d-ordered differencing of a time series $\{X_t\}$ is a stationary process while its $(d-1)$-ordered process is still a non-stationary process, we call this process a d-ordered intrinsically stationary process.

A d-ordered differencing is defined as:
\begin{equation}\label{defferencingdefinition}
\nabla^dX_t=\sum_{k=0}^d\begin{pmatrix} d \\ k \end{pmatrix}(-1)^kX_{t-k}
\end{equation}
\end{definition}

\begin{lemma}\label{prediction}
To a d-ordered intrinsically stationary process, the best linear prediction of $X_{n+h}$ based on observation $\{1,X_1,X_2,\dots,X_n\}$ is:
\begin{equation}
P_nX_{n+h}=\left\{
\begin{array}{rcl}
\begin{split}
&P_n\nabla^dX_{n+h}-\sum\limits_{k=h}^d\begin{pmatrix} d \\ k \end{pmatrix}(-1)^kX_{n+h-k}\\
&-\sum\limits_{k=1}^{h-1}\begin{pmatrix} d \\ k \end{pmatrix}(-1)^kP_nX_{n+h-k}
\end{split}
& & {h\leq d+1}\\
 & & \\
P_n\nabla^dX_{n+h}-\sum\limits_{k=1}^{d}\begin{pmatrix} d \\ k \end{pmatrix}(-1)^kP_nX_{n+h-k} & & {h>d+1}
\end{array}
\right.
\end{equation}
\end{lemma}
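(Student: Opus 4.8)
The plan is to read off the recursion directly from the differencing identity of equation~\ref{defferencingdefinition} instantiated at time $n+h$, and then to apply the best linear predictor $P_n$ to it. First I would evaluate equation~\ref{defferencingdefinition} at $t=n+h$ and peel off the $k=0$ summand, which is exactly $X_{n+h}$ because $\binom{d}{0}(-1)^0=1$. Solving for $X_{n+h}$ gives
\begin{equation*}
X_{n+h}=\nabla^dX_{n+h}-\sum_{k=1}^d\binom{d}{k}(-1)^kX_{n+h-k},
\end{equation*}
which expresses the target in terms of the stationary differenced series $\nabla^d X_{n+h}$ and $d$ preceding levels of the original series.

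Next I would apply $P_n$ to both sides. Since $P_n$ is linear and fixes constants (Definition~\ref{PredictorPropertyAdd}), it passes through the sum to give
\begin{equation*}
P_nX_{n+h}=P_n\nabla^dX_{n+h}-\sum_{k=1}^d\binom{d}{k}(-1)^kP_nX_{n+h-k}.
\end{equation*}
This is already the second branch ($h>d+1$) of the claimed formula: in that regime every index satisfies $n+h-k>n$ for $1\le k\le d$, so each $P_nX_{n+h-k}$ is a genuine future prediction that is retained and resolved recursively. Note that $P_n\nabla^dX_{n+h}$ is the prediction of a stationary process, so it may be treated as a known quantity supplied by the stationary theory developed above.

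The remaining work, which is also the main bookkeeping obstacle, is the case $h\le d+1$, where the summation index crosses the observation boundary. Here I would split the sum at $k=h$: for $k\ge h$ the index $n+h-k\le n$ points at an already-observed datapoint, so Corollary~\ref{ObservedEReal} collapses $P_nX_{n+h-k}$ back to the observed value $X_{n+h-k}$; for $1\le k\le h-1$ we have $n+h-k>n$ and the prediction term is kept. This produces exactly the first branch, with observed part $\sum_{k=h}^d\binom{d}{k}(-1)^kX_{n+h-k}$ and predicted part $\sum_{k=1}^{h-1}\binom{d}{k}(-1)^kP_nX_{n+h-k}$. The one point demanding care is the boundary value $h=d+1$: there the observed sum $\sum_{k=h}^d$ is empty while the predicted sum runs over all $1\le k\le d$, so the two branches coincide and the stated case split is consistent.
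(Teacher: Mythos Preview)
Your proposal is correct and follows essentially the same approach as the paper: both solve the differencing identity for $X_{n+h}$, apply $P_n$ using linearity (Definition~\ref{PredictorPropertyAdd}), and then split the sum at $k=h$ via Corollary~\ref{ObservedEReal} to separate observed from future terms. Your treatment is in fact slightly more careful than the paper's, since you explicitly check consistency at the boundary $h=d+1$.
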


\begin{proof}
From the definition of differencing function, equation \ref{defferencingdefinition},
\begin{equation}
X_{n+h}=\nabla^dX_{n+h}-\sum_{k=1}^d\begin{pmatrix} d \\ k \end{pmatrix}(-1)^kX_{n+h-k}
\end{equation}
which means
\begin{equation}
P_nX_{n+h}=P_n(\nabla^dX_{n+h}-\sum_{k=1}^d\begin{pmatrix} d \\ k \end{pmatrix}(-1)^kX_{n+h-k})
\end{equation}
using Property \ref{PredictorPropertyAdd}, %%%%(cite introduction to time series and forecast page59)
\begin{equation}
P_nX_{n+h}=P_n\nabla^dX_{n+h}-\sum_{k=1}^d\begin{pmatrix} d \\ k \end{pmatrix}(-1)^kP_nX_{n+h-k}
\end{equation}

Introducing Property \ref{ObservedEReal}, when $h\leq d+1$, $\{X_n,X_{n-1},\dots,X_{n+h-d}\}$ is observed, so:
\begin{equation}
\begin{split}
&P_nX_{n+h}\\
=&P_n\nabla^dX_{n+h}-\sum_{k=1}^d\begin{pmatrix} d \\ k \end{pmatrix}(-1)^kP_nX_{n+h-k}\\
=&P_n\nabla^dX_{n+h}-\sum_{k=h}^d\begin{pmatrix} d \\ k \end{pmatrix}(-1)^kX_{n+h-k}-\sum_{k=1}^{h-1}\begin{pmatrix} d \\ k \end{pmatrix}(-1)^kP_nX_{n+h-k}
\end{split}
\end{equation}

When $h>d+1$,
\begin{equation}
P_nX_{n+h}=P_n\nabla^dX_{n+h}-\sum_{k=1}^d\begin{pmatrix} d \\ k \end{pmatrix}(-1)^kP_nX_{n+h-k}
\end{equation}

Q.E.D.
\end{proof}

\begin{lemma}\label{ErrnablaXtoErrX}
The error of the $T+h$-th time step $Err(X_{n+h})$ can be represent recursively by $Err(\nabla^d\mathbf{X}_h)$.
\begin{equation}
Err(X_{n+h})=Err(\nabla^dX_{n+h})-\sum_{k=1}^{min(h-1,d)}\begin{pmatrix} d \\ k \end{pmatrix}(-1)^kErr(X_{n+h-k})
\end{equation}
\end{lemma}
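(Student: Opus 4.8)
The plan is to compute $Err(X_{n+h})$ directly from its definition by substituting the two available expressions for its constituents and then recognizing that the resulting terms regroup into prediction errors at earlier steps. By Definition~\ref{ErrorDef}, $Err(X_{n+h})=P_nX_{n+h}-X_{n+h}$, so I would substitute the closed form for $P_nX_{n+h}$ supplied by Lemma~\ref{prediction} and the expansion
\[
X_{n+h}=\nabla^dX_{n+h}-\sum_{k=1}^d\binom{d}{k}(-1)^kX_{n+h-k}
\]
coming from the differencing definition~\eqref{defferencingdefinition}. Since Lemma~\ref{prediction} is itself stated in two pieces split at $h=d+1$, the computation naturally divides into the same two cases.

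In the case $h>d+1$ the two expressions share the same summation structure: both $P_nX_{n+h}$ and $X_{n+h}$ carry a term $\sum_{k=1}^d\binom{d}{k}(-1)^k(\cdot)_{n+h-k}$, differing only in whether the summand is $P_nX_{n+h-k}$ or $X_{n+h-k}$. Subtracting therefore pairs the leading terms into $P_n\nabla^dX_{n+h}-\nabla^dX_{n+h}=Err(\nabla^dX_{n+h})$ and pairs each remaining term into $-\binom{d}{k}(-1)^k\bigl(P_nX_{n+h-k}-X_{n+h-k}\bigr)=-\binom{d}{k}(-1)^kErr(X_{n+h-k})$, which gives the claimed identity with the sum running to $k=d=\min(h-1,d)$.

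The case $h\le d+1$ carries the only genuine bookkeeping. Here Lemma~\ref{prediction} breaks the sum into an already observed block $\sum_{k=h}^d$, left as the true values $X_{n+h-k}$ by Corollary~\ref{ObservedEReal}, and a predicted block $\sum_{k=1}^{h-1}$. After subtracting $X_{n+h}$, whose sum runs over all $k$ from $1$ to $d$, the observed block $\sum_{k=h}^d\binom{d}{k}(-1)^kX_{n+h-k}$ cancels exactly against the matching part of the full sum, leaving the complementary range $\sum_{k=1}^{h-1}\binom{d}{k}(-1)^kX_{n+h-k}$. This combines with $-\sum_{k=1}^{h-1}\binom{d}{k}(-1)^kP_nX_{n+h-k}$ into $-\sum_{k=1}^{h-1}\binom{d}{k}(-1)^kErr(X_{n+h-k})$, again matching the stated formula since $\min(h-1,d)=h-1$ in this regime.

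The one step that needs care is verifying this cancellation of the observed block against the full $k=1,\dots,d$ sum and confirming that the surviving indices are precisely $1\le k\le h-1$; everything else is the routine regrouping of differences $P_nX_{n+h-k}-X_{n+h-k}$ into $Err(X_{n+h-k})$. I do not anticipate any analytic difficulty, since linearity of the predictor (Definition~\ref{PredictorPropertyAdd}) has already been exploited inside Lemma~\ref{prediction}; the present lemma is essentially an accounting identity expressing a non-stationary error through the stationary error $Err(\nabla^dX_{n+h})$ of the differenced process together with lower-order non-stationary errors.
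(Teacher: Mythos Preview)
Your proposal is correct and follows essentially the same route as the paper's proof: both split into the cases $h\le d+1$ and $h>d+1$, substitute Lemma~\ref{prediction} and the differencing expansion, cancel the observed block $\sum_{k=h}^{d}$ in the first case, and regroup the remaining differences into $Err(\nabla^dX_{n+h})$ and $Err(X_{n+h-k})$. There is nothing materially different between your argument and the paper's.
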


\begin{proof}
When, $h\leq d+1$, using the definition of differencing (equation \ref{defferencingdefinition}):
\begin{equation}\label{h=hdefinition}
X_{n+h}=\nabla^dX_{n+h}-\sum_{k=h}^d\begin{pmatrix} d \\ k \end{pmatrix}(-1)^kX_{n+h-k}-\sum_{k=1}^{h-1}\begin{pmatrix} d \\ k \end{pmatrix}(-1)^kX_{n+h-k}
\end{equation}
and Lemma \ref{prediction}, the error $Err(X_{n+h})$ is:
\begin{equation}
\begin{split}
Err(X_{n+h})=&P_nX_{n+h}-X_{n+h}\\
=&P_n\nabla^dX_{n+h}-\sum_{k=h}^d\begin{pmatrix} d \\ k \end{pmatrix}(-1)^kX_{n+h-k}\\
&-\sum_{k=1}^{h-1}\begin{pmatrix} d \\ k \end{pmatrix}(-1)^kP_nX_{n+h-k}-X_{n+h}\\
=&P_n\nabla^dX_{n+h}-\sum_{k=h}^d\begin{pmatrix} d \\ k \end{pmatrix}(-1)^kX_{n+h-k}\\
&-\sum_{k=1}^{h-1}\begin{pmatrix} d \\ k \end{pmatrix}(-1)^kP_nX_{n+h-k}-\nabla^dX_{n+h}\\
&+\sum_{k=h}^d\begin{pmatrix} d \\ k \end{pmatrix}(-1)^kX_{n+h-k}+\sum_{k=1}^{h-1}\begin{pmatrix} d \\ k \end{pmatrix}(-1)^kX_{n+h-k}\\
=&Err(\nabla^dX_{n+h})-\sum_{k=1}^{h-1}\begin{pmatrix} d \\ k \end{pmatrix}(-1)^kErr(X_{n+h-k})
\end{split}
\end{equation}

Then for $h > d+1$, similarly, using the definition (equation \ref{defferencingdefinition}):
\begin{equation}\label{h>d+1definition}
X_{n+h}=\nabla^dX_{n+h}-\sum_{k=1}^{d}\begin{pmatrix} d \\ k \end{pmatrix}(-1)^kX_{n+h-k}
\end{equation}
and Theorem \ref{prediction}, the error $Err(X_{n+h})$ is:
\begin{equation}
\begin{split}
Err(X_{n+h})=&P_X{n+h}-X_{n+h}\\
=&P_n\nabla^dX_{n+h}-\sum_{k=1}^{d}\begin{pmatrix} d \\ k \end{pmatrix}(-1)^kP_nX_{n+h-k}-X_{n+h}\\
=&P_n\nabla^dX_{n+h}-\sum_{k=1}^{d}\begin{pmatrix} d \\ k \end{pmatrix}(-1)^kP_nX_{n+h-k}\\
&-\nabla^dX_{n+h}+\sum_{k=1}^{d}\begin{pmatrix} d \\ k \end{pmatrix}(-1)^kX_{n+h-k}\\
=&Err(\nabla^dX_{n+h})-\sum_{k=1}^{d}\begin{pmatrix} d \\ k \end{pmatrix}(-1)^kErr(X_{n+h-k})
\end{split}
\end{equation}
Q.E.D.
\end{proof}

\begin{theorem}\label{nonstationaryconclusion}
The joint distribution among errors of different time steps is a multivariate normal distribution
\begin{equation}
Err(\mathbf{X}_h)\sim N_h(0,\Sigma_ErrX)
\end{equation}
where
\begin{equation}
\Sigma_ErrX=C_{Err\nabla XtoErrX}\Sigma_{Err\nabla^dX}C_{Err\nabla XtoErrX}^\top
\end{equation}
in which $\Sigma_{Err\nabla^dX}$ is the covariance matrix of $Err(\nabla^d\mathbf{X}_h)$ given by Theorem \ref{stationaryconclusion} and
\begin{equation}
C_{Err\nabla XtoErrX}=\begin{pmatrix}
1 & 0 & 0 & \dots & 0 \\
T_{21} & 1 & 0 & \dots & 0 \\
\vdots & \vdots & \vdots & \dots & 0\\
T_{h1} & T_{h2} & T_{h3} & \dots & 1
\end{pmatrix}
\end{equation}
in which $T_{ij}$ is the coefficient of $Err(\nabla^dX_{T+i})$ when calculating $Err(X_{T+j})$.

$T_{ij}$ can be recursively given by:
For any ($i \geq 2$),
\begin{equation}
\begin{split}
T_{i,1}=\sum_{k=1}^{min(h-1,d)}\begin{pmatrix} d \\ k \end{pmatrix}(-1)^{k+1}T_{i-k,1}
\end{split}
\end{equation}

For any ($j \geq 2$),
\begin{equation}
T_{i,j}=T_{i-1,j-1}
\end{equation}
\end{theorem}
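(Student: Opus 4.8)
The plan is to reduce the intrinsically stationary case to the already-settled stationary case of Theorem~\ref{stationaryconclusion} by treating the differenced series $\{\nabla^d X_t\}$ as the stationary driver. By the very definition of a $d$-ordered intrinsically stationary process, $\{\nabla^d X_t\}$ is stationary, and for an ARIMA$(p,d,q)$ model it is in fact an ARMA$(p,q)$ process. Hence Theorem~\ref{stationaryconclusion} applies verbatim to $\{\nabla^d X_t\}$ and already delivers $Err(\nabla^d\mathbf{X}_h)\sim N_h(0,\Sigma_{Err\nabla^dX})$ with the covariance matrix named in the statement. The remaining task is purely to transport this Gaussianity through the differencing relation back to $Err(\mathbf{X}_h)$.

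First I would invoke Lemma~\ref{ErrnablaXtoErrX}, which expresses each $Err(X_{n+h})$ as $Err(\nabla^d X_{n+h})$ plus a linear combination of the lower-index errors $Err(X_{n+h-k})$ with the time-independent coefficients $-\binom{d}{k}(-1)^k$. Reading this recursion simultaneously for the target indices $1,\dots,h$, I would collect it into a single matrix identity $Err(\mathbf{X}_h)=C_{Err\nabla XtoErrX}\,Err(\nabla^d\mathbf{X}_h)$. The matrix is lower triangular with unit diagonal: $Err(X_{n+h})$ carries coefficient $1$ on $Err(\nabla^d X_{n+h})$ and, because the recursion only reaches backward in time, depends on no $Err(\nabla^d X_{n+j})$ with $j>h$.

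With the linear relation in hand, the distributional conclusion is immediate from Definition~\ref{lineartransform}: a linear image $Y=CX$ of a multivariate normal vector is again multivariate normal with mean $C\mu$ and covariance $C\Sigma C^\top$. Taking $X=Err(\nabla^d\mathbf{X}_h)$, $C=C_{Err\nabla XtoErrX}$, $\mu=0$, and $\Sigma=\Sigma_{Err\nabla^dX}$ yields $Err(\mathbf{X}_h)\sim N_h(0,\Sigma_{ErrX})$ with $\Sigma_{ErrX}=C_{Err\nabla XtoErrX}\Sigma_{Err\nabla^dX}C_{Err\nabla XtoErrX}^\top$, exactly as claimed. This portion mirrors the final step of Theorem~\ref{stationaryconclusion}, with $C_{ErrYtoErrX}$ replaced by $C_{Err\nabla XtoErrX}$.

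The one place needing genuine care is the closed recursion for the entries $T_{ij}$, and I expect this bookkeeping to be the main obstacle. I would obtain the first-column rule by extracting, from both sides of Lemma~\ref{ErrnablaXtoErrX}, the coefficient of $Err(\nabla^d X_{n+1})$: the leading term $Err(\nabla^d X_{n+i})$ contributes nothing for $i\geq 2$, while the sum contributes $-\sum_k\binom{d}{k}(-1)^k T_{i-k,1}$, giving $T_{i,1}=\sum_k\binom{d}{k}(-1)^{k+1}T_{i-k,1}$ with the range of $k$ inherited from Lemma~\ref{ErrnablaXtoErrX}. The shift relation $T_{i,j}=T_{i-1,j-1}$ for $j\geq 2$ then follows from the time-invariance of that recursion: its coefficients $\binom{d}{k}(-1)^k$ do not depend on the absolute time index, so advancing every index by one leaves each coefficient unchanged. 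An induction on $i$ using these two rules, with base value $T_{1,1}=1$, fills in the entire lower-triangular pattern and completes the proof.
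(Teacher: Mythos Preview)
Your proposal is correct and follows essentially the same approach as the paper: invoke Lemma~\ref{ErrnablaXtoErrX} to write $Err(\mathbf{X}_h)$ as a lower-triangular linear transform of $Err(\nabla^d\mathbf{X}_h)$, then apply Definition~\ref{lineartransform} together with Theorem~\ref{stationaryconclusion} on the differenced (stationary) series. If anything, your treatment of the $T_{ij}$ recursion is more careful than the paper's, which simply asserts the recursive formulas without the coefficient-extraction and time-invariance arguments you sketch.
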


\begin{proof}
From Lemma \ref{ErrnablaXtoErrX}
\begin{equation}
\begin{pmatrix}
Err(X_{n+1}) \\
Err(X_{n+2}) \\
\vdots \\
Err(X_{n+h}) \end{pmatrix}
=\begin{pmatrix}
1 & 0 & 0 & \dots & 0 \\
T_{21} & 1 & 0 & \dots & 0 \\
\vdots & \vdots & \vdots & \dots & 0\\
T_{h1} & T_{h2} & T_{h3} & \dots & 1
\end{pmatrix}
\begin{pmatrix}
Err(\nabla^dX_{n+1}) \\ Err(\nabla^dX_{n+2}) \\ \vdots \\ Err(\nabla^dX_{n+h})
\end{pmatrix}
\end{equation}
in which $T_{ij}$ is the coefficient of $Err(\nabla^dX_{n+i})$ when calculating $Err(X_{n+j})$.

$T_{ij}$ can be recursively given by:
For any ($i \geq 2$),
\begin{equation}
\begin{split}
T_{i,1}=\sum_{k=1}^{min(h-1,d)}\begin{pmatrix} d \\ k \end{pmatrix}(-1)^{k+1}T_{i-k,1}
\end{split}
\end{equation}

For any ($j \geq 2$),
\begin{equation}
T_{i,j}=T_{i-1,j-1}
\end{equation}

Denoting
\begin{equation*}
Err(\nabla^d\mathbf{X}_h):=\begin{pmatrix}
Err(\nabla^dX_{n+1})\\
Err(\nabla^dX_{n+2})\\
\vdots\\
Err(\nabla^dX_{n+h})\end{pmatrix}
\end{equation*}
\begin{equation*}
C_{Err\nabla XtoErrX}:=\begin{pmatrix}
1 & 0 & 0 & \dots & 0 \\
T_{21} & 1 & 0 & \dots & 0 \\
\vdots & \vdots & \vdots & \dots & 0\\
T_{h1} & T_{h2} & T_{h3} & \dots & 1
\end{pmatrix}
\end{equation*}
\begin{equation*}
Err(\mathbf{X}_h):=\begin{pmatrix}
Err(X_{n+1}) \\
Err(X_{n+2}) \\
\vdots \\
Err(X_{n+h}) \end{pmatrix}
\end{equation*}

Following Definition \ref{lineartransform}, $Err(\mathbb{X}_h)\sim N_h(0,\Sigma_{ErrX})$ where
\begin{equation}
\Sigma_{ErrX}=
C_{Err\nabla XtoErrX}\Sigma_{Err\nabla X}C_{Err\nabla XtoErrX}^\top
\end{equation}

Q.E.D.
\end{proof}

Now we have the joint guarantee of any prediction interval of a series of time steps $\{X_{n+1},X_{n+2},\dots ,X_{n+h}\}$.We can use the conclusion of Theorem \ref{nonstationaryconclusion} when dealing with an intrinsically stationary process.

\section{Error of Prediction of Multivariate Process}
\subsection{Error of Stationary Multivariate Process}
All of the time series we have discussed are univariate time series, now we want to generalize our conclusion to multivariate cases. Similarly, we will deal with the stationary case firstly.
\begin{definition}
The best linear prediction of $X_{n+h}$ is\cite{StatTexBook}:
\begin{equation}
P(X_{n+h}\mid X_n,X_{n-1},\dots ,X_1,\begin{pmatrix} 1 \\ 1 \\ \vdots \\ 1\end{pmatrix})=A_0^h+\sum_{i=1}^nA_i^hX_{n+1-i}
\end{equation}
in which $\{X_t\}$ is a time series with m variate, $A_0^h$ is a $m\times 1$ vector and $A_i^h$ are $n\times n$ matrices.

The optimized choice of $A_i^h$ are
\begin{equation}\label{multibestchoice}
\sum_{j=1}^nA_j^h\Gamma(i-j)=\Gamma(i+h-1)\quad {i=1,2,\dots,n}
\end{equation}
\end{definition}

\begin{definition}
Denoting $\mathbf{Z}_t:=[Z_i]_{i=1}^h$, where $Z_i$ is the $[m\times m]$ white noise of time step i.
\begin{equation}
\mathbf{Z}_t\sim N_{mt}(0,\Sigma_{\mathbf{Z}})
\end{equation}
where
\begin{equation}
\Sigma_{\mathbf{Z}}=\sigma^2E_{mt\times mt}
\end{equation}
\end{definition}

\begin{lemma}\label{multiZtoY}
Defining a multivariate MA process $\{Y_t\}$
\begin{equation}\label{multiYtdefinition}
Y_t=\Phi(B)X_t \quad t>max(p,q)
\end{equation}
where $\Phi_i$ is a $m\times m$ matrix.

Denoting $\mathbf{Y}_h:=[Y_{n+i}]_{i=1}^h$, then
\begin{equation}
\mathbf{Y}_h=C_1\begin{pmatrix}
Z_{n-q+1}\\
Z_{n-q+2}\\
\vdots\\
Z_{n+h}\end{pmatrix}
\end{equation}
where $C_1$ is a $[mh\times m(q+h)]$ transform matrix from
$\begin{pmatrix}
Z_{n-q+1}\\
Z_{n-q+2}\\
\vdots\\
Z_{n+h}\end{pmatrix}$
to $\mathbf{Y}_h$.
\begin{equation}\label{multiZ2Ytransformmatrix}
C_1=\begin{pmatrix}
\Theta_q & \Theta_{q-1} & \dots & \Theta_1 & E_{m\times m} & O_{m\times m} & O_{m\times m} & \dots & O_{m\times m}\\
O_{m\times m} & \Theta_q & \Theta_{q-1} & \dots & \Theta_1 & E_{m\times m} & O_{m\times m} & \dots & O_{m\times m}\\
O_{m\times m} & O_{m\times m} & \Theta_q & \Theta_{q-1} & \dots & \Theta_1 & E & \dots & O_{m\times m}\\
\vdots & \vdots & \vdots & \ddots & \ddots & \ddots & \ddots & \ddots & \vdots\\
O_{m\times m} & O_{m\times m} & O_{m\times m} & \dots & \Theta_q & \Theta_{q-1} & \dots & \Theta_1 & E_{m\times m}
\end{pmatrix}
\end{equation}
\end{lemma}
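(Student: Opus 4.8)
The plan is to carry over the argument of Lemma~\ref{C1} almost verbatim, replacing the scalar moving-average coefficients by their $m\times m$ matrix analogues $\Theta_j$ and the scalar white noise by the $m$-dimensional white-noise vectors $Z_t$. The only ingredients I need are that $\{Y_t\}$ admits a finite moving-average representation in terms of $\{Z_t\}$ and that matrix--vector multiplication is linear, so that each block $Y_{n+i}$ is a fixed linear combination of a finite window of the $Z_t$. No distributional facts are required at this stage; this lemma is purely the construction of the transform matrix $C_1$.

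First I would expand the moving-average operator. Because $\{Y_t\}$ is a multivariate MA$(q)$ process driven by the white noise $\{Z_t\}$, it has the finite representation $Y_t=\Theta(B)Z_t=\sum_{j=0}^{q}\Theta_j Z_{t-j}$, where I set $\Theta_0:=E_{m\times m}$ for the $m\times m$ identity. Specialising to the prediction window gives, for each $i=1,\dots,h$,
\begin{equation*}
Y_{n+i}=\Theta(B)Z_{n+i}=\sum_{j=0}^{q}\Theta_j Z_{n+i-j}.
\end{equation*}
Next I would perform the range check that guarantees a common noise window: as $i$ runs over $1,\dots,h$ and $j$ over $0,\dots,q$, the time index $n+i-j$ ranges over all integers from $n-q+1$ (attained at $i=1$, $j=q$) to $n+h$ (attained at $i=h$, $j=0$). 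Hence every noise term occurring in $Y_{n+1},\dots,Y_{n+h}$ is one of the $q+h$ blocks of the stacked vector $(Z_{n-q+1},\dots,Z_{n+h})^\top$, whose total dimension is $m(q+h)$, while $\mathbf{Y}_h$ has dimension $mh$; this already fixes the shape of $C_1$ as $mh\times m(q+h)$.

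Then I would read off the blocks of $C_1$. Indexing the block columns by $k=1,\dots,q+h$ so that column block $k$ multiplies $Z_{n-q+k}$, the matrix coefficient standing in front of $Z_{n-q+k}$ inside $Y_{n+i}$ is $\Theta_{q+i-k}$ whenever $0\le q+i-k\le q$, and the zero block $O_{m\times m}$ otherwise. For fixed $i$ this places $\Theta_q,\Theta_{q-1},\dots,\Theta_1,E_{m\times m}$ in the consecutive column blocks $k=i,\dots,q+i$, so that the occupied band shifts exactly one block to the right each time $i$ increases by one. Stacking the $h$ row blocks then reproduces precisely the block-Toeplitz matrix displayed in \eqref{multiZ2Ytransformmatrix} and yields $\mathbf{Y}_h=C_1(Z_{n-q+1},\dots,Z_{n+h})^\top$.

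I do not expect a genuine obstacle here, since the content is entirely the bookkeeping of block indices. The one point that warrants care is the block-column alignment: I must confirm that the identity block $E_{m\times m}$, which arises from the $j=0$ term $\Theta_0$, always lands in the rightmost occupied position $k=q+i$ of row block $i$, and that the single-block rightward shift per row matches the staircase pattern of \eqref{multiZ2Ytransformmatrix}. Once this alignment is verified, the claim follows immediately by linearity, exactly as in the univariate Lemma~\ref{C1}.
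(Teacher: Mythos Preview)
Your proposal is correct and follows essentially the same approach as the paper: both proofs simply expand the multivariate moving-average operator $Y_t=\Theta(B)Z_t=Z_t+\Theta_1 Z_{t-1}+\cdots+\Theta_q Z_{t-q}$ and read off the block structure of $C_1$. Your version is in fact more thorough, since you explicitly check the index range and the block-column alignment, whereas the paper's proof records only the expansion and leaves the bookkeeping implicit.
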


\begin{proof}
From the definition of multivariate ARMA(p,q) process,
\begin{equation}
\begin{split}
Y_t=&\Theta(B)Z_t\quad {t\geq p+1}\\
=&Y_t=Z_t+\Theta_1Z_{t-1}+\Theta_2Z_{t-2}+\dots +\Theta_qZ_{t-q}
\end{split}
\end{equation}

Q.E.D.
\end{proof}

\begin{lemma}\label{multiZtoPY}
The best linear predictor of $\mathbf{Y}_h$ is a linear combination of
$\begin{pmatrix}
Z_{p-q+1}\\
Z_{p-q+2}\\
\vdots\\
Z_{n}
\end{pmatrix}$.
\begin{equation}
P_n\mathbb{Y}_h=C_3C_2\begin{pmatrix}
Z_{p-q+1}\\
Z_{p-q+2}\\
\vdots\\
Z_{n}
\end{pmatrix}
\end{equation}
where
\begin{equation}
C_2=\begin{pmatrix}
\Theta_q & \Theta_{q-1} & \dots & \Theta_1 & E_{m\times m} & O_{m\times m} & O_{m\times m} & \dots & O_{m\times m}\\
O_{m\times m} & \Theta_q & \Theta_{q-1} & \dots & \Theta_1 & E_{m\times m} & O_{m\times m} & \dots & O_{m\times m}\\
O_{m\times m} & O_{m\times m} & \Theta_q & \Theta_{q-1} & \dots & \Theta_1 & E_{m\times m} & \dots & O_{m\times m}\\
\vdots & \vdots & \vdots & \ddots & \ddots & \ddots & \ddots & \ddots & \vdots\\
O_{m\times m} & O_{m\times m} & O_{m\times m} & \dots & \Theta_q & \Theta_{q-1} & \dots & \Theta_1 & E_{m\times m}
\end{pmatrix}
\end{equation}
is a $[m(n-p)\times m(n-p+q)]$ transform matrix from
$
\begin{pmatrix}
Z_{p-q+1}\\
Z_{p-q+2}\\
\vdots\\
Z_{n}
\end{pmatrix}
$ to
$
\begin{pmatrix}
Y_{p+1}\\
Y_{p+2}\\
\vdots\\
Y_{n}\end{pmatrix}
$.

And
\begin{equation}
C_3=\begin{pmatrix}
A_{n-p}^1 & A_{n-p-1}^1 & \dots & A_1^1 \\
A_{n-p}^2 & A_{n-p-1}^2 & \dots & A_1^2 \\
\vdots & \vdots & \ddots & \vdots \\
A_{n-p}^h & A_{n-p-1}^h & \dots & A_1^h
\end{pmatrix}
\end{equation}
is the coefficient matrix of $(Y_{p+1},Y_(p+2),\dots,Y_n)$ when calculating $P_T\mathbb{Y}_h$, the best linear predictor of $\{Y_{n+1},Y_{n+2},\dots,Y_{n+h}\}$, where $\{Y_t\}$ a multivariate stationary MA(q) process with zero mean.

$A_{i}^j$ will be given by equation \ref{multibestchoice}.
\end{lemma}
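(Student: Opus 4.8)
The plan is to factor the predictor $P_n\mathbf{Y}_h$ through the block of observed values $(Y_{p+1},\dots,Y_n)^\top$, mirroring the univariate argument of Lemma~\ref{C3C2}. Concretely, I would first write $P_n\mathbf{Y}_h = C_3\,(Y_{p+1},\dots,Y_n)^\top$, then substitute the white-noise representation $(Y_{p+1},\dots,Y_n)^\top = C_2\,(Z_{p-q+1},\dots,Z_n)^\top$, so that composing the two yields $P_n\mathbf{Y}_h = C_3C_2\,(Z_{p-q+1},\dots,Z_n)^\top$, which is the claimed identity.

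For the $C_2$ factor: since $\{Y_t\}$ obeys the multivariate MA$(q)$ recursion $Y_t = Z_t + \Theta_1 Z_{t-1} + \cdots + \Theta_q Z_{t-q}$ for $t > \max(p,q)$, writing this out for each observed index $t = p+1,\dots,n$ expresses the stacked vector $(Y_{p+1},\dots,Y_n)^\top$ as a linear image of $(Z_{p-q+1},\dots,Z_n)^\top$. The coefficient array is exactly the banded block-Toeplitz matrix $C_2$, each of whose rows shifts the pattern $(\Theta_q,\dots,\Theta_1,E_{m\times m})$ by one $m$-block; this is the same construction as in Lemma~\ref{multiZtoY}, applied over the observation window instead of the prediction window, which fixes the dimensions at $m(n-p)\times m(n-p+q)$.

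For the $C_3$ factor: because $\{Y_t\}$ is stationary with zero mean, the best linear predictor is $P_nY_{n+h} = A_0^h + \sum_{i=1}^{n-p} A_i^h Y_{n+1-i}$ with the constant block $A_0^h = 0$ (the multivariate counterpart of $a_0 = \mu(1-\sum a_i)$ vanishing when $\mu = 0$), and the matrix coefficients $A_i^h$ solve the normal equations of equation~\ref{multibestchoice}. Stacking these predictors over the horizon gives $P_n\mathbf{Y}_h = C_3\,(Y_{p+1},\dots,Y_n)^\top$, where the block in row-group $j$ and column-group $i$ of $C_3$ is the coefficient of $Y_{p+i}$ in $P_nY_{n+j}$. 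Since $A_i^h$ multiplies $Y_{n+1-i}$, the column ordering of $C_3$ must run $A_{n-p}^h, A_{n-p-1}^h, \dots, A_1^h$, so that it matches the top-to-bottom ordering $Y_{p+1},\dots,Y_n$ of the observation vector.

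The step I expect to demand the most care is the block-index bookkeeping that joins the two factors, rather than any single computation: one must verify that the reversal between the predictor index $i$ (where $A_i^h$ multiplies $Y_{n+1-i}$) and the column position inside $C_3$ is consistent with the row ordering produced by $C_2$, so that the product $C_3C_2$ acts on the correct white-noise components. I would also flag the one modelling assumption being used, namely that the predictor draws only on the realizable MA observations $Y_{p+1},\dots,Y_n$ available for $t>\max(p,q)$; this is precisely what gives $C_3$ its $n-p$ block columns and lets it compose dimensionally with $C_2$.
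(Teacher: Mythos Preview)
Your proposal is correct and follows essentially the same approach as the paper. The paper in fact states this multivariate lemma without a separate proof, relying on the identical argument given for the univariate Lemma~\ref{C3C2}: note that $\{Y_t\}$ is stationary with zero mean so that $A_0^h=0$ and $P_nY_{n+h}=\sum_{i=1}^{n-p}A_i^h Y_{n+1-i}$, then read off $C_3$ as the coefficient block matrix and $C_2$ from the MA($q$) representation over the observation window---exactly the factorization you describe, with the same index-alignment caveat.
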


\begin{theorem}
The joint distribution among errors of different time steps of $\mathbf{Y}_h$ is a multivariate distribution
$$Err(\mathbf{Y}_h)\sim N_{mh}(0,\Sigma_{ErrY})$$, where
\begin{equation}
\begin{split}
\Sigma_{ErrY}=&C_{ZtoErrY}\Sigma_ZC_{ZtoErrY}^\top\\
=&\sigma^2C_{ZtoErrY}E_{m(T+h-p+q)\times m(T+h-p+q)}C_{ZtoErrY}^\top
\end{split}
\end{equation}
\end{theorem}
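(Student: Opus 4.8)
The plan is to reproduce, block by block, the argument used for the univariate moving-average theorem, since the only structural change in the multivariate setting is that every scalar coefficient becomes an $m\times m$ block and every noise term $Z_i$ becomes an $m$-vector. I would begin from the definition $Err(Y_{n+h})=P_nY_{n+h}-Y_{n+h}$, stacked into vector form as $Err(\mathbf{Y}_h)=P_n\mathbf{Y}_h-\mathbf{Y}_h$.

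First I would invoke Lemma \ref{multiZtoY} to write $\mathbf{Y}_h=C_1\,[Z_{n-q+1},\dots,Z_{n+h}]^\top$, the block-banded representation of the moving-average vector in terms of the noise running from $Z_{n-q+1}$ to $Z_{n+h}$, and Lemma \ref{multiZtoPY} to write $P_n\mathbf{Y}_h=C_3C_2\,[Z_{p-q+1},\dots,Z_n]^\top$ in terms of the shorter noise block from $Z_{p-q+1}$ to $Z_n$. Subtracting these gives $Err(\mathbf{Y}_h)$ as a difference of two linear images of overlapping but differently indexed noise vectors.

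The key step is to express both terms over a common noise vector. I would embed both index ranges into the full block vector $\mathbf{Z}=[Z_{p-q+1},\dots,Z_{n+h}]^\top$, which has $m(n+h-p+q)$ scalar entries, by appending zero blocks $O_{m\times m}$: pad $C_1$ on the left to form $C_1^*=(O\ \ C_1)$ and pad $C_2$ on the right to form $C_2^*=(C_2\ \ O)$, so that $\mathbf{Y}_h=C_1^*\mathbf{Z}$ and $P_n\mathbf{Y}_h=C_3C_2^*\mathbf{Z}$ act on the same $\mathbf{Z}$. Subtracting yields $Err(\mathbf{Y}_h)=(C_3C_2^*-C_1^*)\mathbf{Z}$, and setting $C_{ZtoErrY}:=C_3C_2^*-C_1^*$ gives the single linear representation $Err(\mathbf{Y}_h)=C_{ZtoErrY}\mathbf{Z}$.

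Finally, since $\mathbf{Z}\sim N_{m(n+h-p+q)}(0,\Sigma_Z)$ with $\Sigma_Z=\sigma^2 E$, the multivariate-normal linear-transform property of Definition \ref{lineartransform} gives directly $Err(\mathbf{Y}_h)\sim N_{mh}(0,\Sigma_{ErrY})$ with $\Sigma_{ErrY}=C_{ZtoErrY}\Sigma_Z C_{ZtoErrY}^\top=\sigma^2 C_{ZtoErrY}E\,C_{ZtoErrY}^\top$, which is the claimed formula. The main obstacle I anticipate is purely bookkeeping: verifying that the block-banded structures of $C_1$ and $C_2$ align correctly after zero-padding, so that $C_3C_2^*$ and $C_1^*$ genuinely multiply the identical noise vector, and confirming that the dimensions ($mh$ rows, $m(n+h-p+q)$ columns) are consistent. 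Because each block row of $C_1$ is a shifted copy of $(\Theta_q,\dots,\Theta_1,E_{m\times m})$, the shift pattern must match the block indices of $\mathbf{Z}$; once this alignment is checked, the conclusion follows with no further computation.
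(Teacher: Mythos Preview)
Your proposal is correct and follows essentially the same approach as the paper: invoke Lemmas \ref{multiZtoY} and \ref{multiZtoPY}, subtract, zero-pad $C_1$ and $C_2$ to act on the common noise block $[Z_{p-q+1},\dots,Z_{n+h}]^\top$, set $C_{ZtoErrY}=C_3C_2^*-C_1^*$, and apply Definition \ref{lineartransform}. The bookkeeping you flag (block alignment and dimensions) is exactly the only content of the argument, and the paper handles it identically with the explicit augmentations $C_1^*=(O_{mh\times m(n-p)}\ C_1)$ and $C_2^*=(C_2\ O_{m(n-p)\times mh})$.
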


\begin{proof}
Denoting $\mathbf{X}_h:=[X_{n+i}]_{i=1}^h$, $P_n\mathbf{X}_h:=[P_nX_{n+i}]_{i=1}^h$, $Err(\mathbf{X}_h):=[Err(X_{n+i})]_{i=1}^h$, $\mathbf{Y}_h:=[Y_{n+i}]_{i=1}^h$, $P_n\mathbf{Y}_h:=[P_nY_{n+i}]_{i=1}^h$, $Err(\mathbf{Y}_h):=[Err(Y_{n+i})]_{i=1}^h$,

Error of $Y_{T+h}$ is defined as
\begin{equation}
Err(Y_{n+h})=P_nY_{n+h}-Y_{n+h}
\end{equation}
From Lemma \ref{multiZtoY} and Lemma \ref{multiZtoPY},
\begin{equation}\label{multiPminersX}
Err(\mathbb{Y}_h)=C_3C_2\begin{pmatrix}
Z_{p-q+1}\\
Z_{p-q+2}\\
\vdots\\
Z_{n}
\end{pmatrix}
-C_1\begin{pmatrix}
Z_{n-q+1}\\
Z_{n-q+2}\\
\vdots\\
Z_{n+h}
\end{pmatrix}
\end{equation}

Augment the coefficient matrices  $C_1$ and $C_2$ to $C_1^*$ and $C_2^*$ by
\begin{equation}
\begin{array}{c}
C_1^*=\left(O_{mh\times m(n-p)} \quad C_1\right)\\
C_2^*=\left(C_2 \quad O_{m(n-p)\times mh}\right)
\end{array}
\end{equation}

Equation \ref{multiPminersX} can be transformed to
\begin{equation}\label{multiZ2Error}
\begin{split}
Err(\mathbb{Y}_h)=&C_3C_2^*\begin{pmatrix}
Z_{p-q+1}\\
Z_{p-q+2}\\
\vdots\\
Z_{n+h}
\end{pmatrix}
-C_1^*\begin{pmatrix}
Z_{p-q+1}\\
Z_{p-q+2}\\
\vdots\\
Z_{n+h}
\end{pmatrix}\\
=&(C_3C_2^*-C_1^*)\begin{pmatrix}
Z_{p-q+1}\\
Z_{p-q+2}\\
\vdots\\
Z_{n+h}
\end{pmatrix}
\end{split}
\end{equation}

Denoting $C_{ZtoErrY}:=C_3C_2^*-C_1^*$. Follow Definition \ref{lineartransform}, $Err(\mathbb{Y}_h)\sim N_{mh}(0,\Sigma_{ErrY})$, where
\begin{equation}
\Sigma_{ErrY}=C_{ZtoErrY}\Sigma_ZC_{ZtoErrY}^\top
\end{equation}

Q.E.D.
\end{proof}

\begin{theorem}\label{multivariatestationary}
When $\{X_t\}$ is a multivariate ARMA(p,q) process with n variables, using the best linear predictor $P_nX_{n+h}$, the joint distribution among errors of different time steps is a multivariate normal distribution
$$Err(\mathbb{X}_{h})\sim N_{mh}(0,\Sigma_{ErrX})$$,
\begin{equation}
\begin{split}
\Sigma_{ErrX}=&C_{ErrYtoErrX}C_{ZtoErrY}\Sigma_ZC_{ZtoErrY}^\top C_{ErrYtoErrX}^\top\\
=&\sigma^2C_{ErrYtoErrX}C_{ZtoErrY}E_{m(n+h-p+q)\times m(n+h-p+q)}C_{ZtoErrY}^\top C_{ErrYtoErrX}^\top
\end{split}
\end{equation}
where
\begin{equation}
C_{ErrYtoErrX}=\begin{pmatrix}
E_{m\times m} & O_{m\times m} & O_{m\times m} & \dots & O_{m\times m}\\
C_{21} & E_{m\times m} & O_{m\times m} & \dots & O_{m\times m} \\
\vdots & \vdots & \ddots & \ddots & \vdots \\
C_{h1} & C_{h2} & C_{h3} & \dots & E_{m\times m}
\end{pmatrix}
\end{equation}
in which $C_{ij}$ is the coefficient matrix of $\sigma Err(Y_{n+i})$ when calculating $Err(X_{n+j})$.

$C_{ij}$ can be recursively given by:
For any ($i \geq 2$),
\begin{equation}
\begin{split}
C_{i,1}=\sum_{k=1}^{min(p,h-1)}\Phi_kC_{i-k,1}
\end{split}
\end{equation}

For any ($j \geq 2$),
\begin{equation}
C_{i,j}=C_{i-1,j-1}
\end{equation}
\end{theorem}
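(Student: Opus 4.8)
The plan is to mirror the univariate argument of Theorem~\ref{stationaryconclusion}, replacing the scalar autoregressive coefficients by the $m\times m$ matrices $\Phi_i$ and carrying the (noncommutative) matrix factors in the correct order throughout. First I would start from the multivariate relation $Y_t=\Phi(B)X_t$ of Lemma~\ref{multiZtoY} and rewrite it as
\begin{equation*}
X_{n+h}=Y_{n+h}+\sum_{i=1}^p\Phi_iX_{n+h-i}.
\end{equation*}
Applying the best linear predictor $P_n$ and using that each $\Phi_i$ is a constant matrix, so that $P_n$ commutes with left multiplication by $\Phi_i$ and with addition (the multivariate form of Definition~\ref{PredictorPropertyAdd}), gives $P_nX_{n+h}=P_nY_{n+h}+\sum_{i=1}^p\Phi_iP_nX_{n+h-i}$.

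Next I would invoke the multivariate version of Corollary~\ref{ObservedEReal}: when $n+h-i\le n$, i.e.\ $i\ge h$, the vector $X_{n+h-i}$ is already observed, so $P_nX_{n+h-i}=X_{n+h-i}$ and its contribution to the error vanishes. Subtracting $X_{n+h}$ from $P_nX_{n+h}$ and regrouping, the observed terms cancel exactly and I obtain the key error recursion
\begin{equation*}
Err(X_{n+h})=Err(Y_{n+h})+\sum_{i=1}^{\min(p,h-1)}\Phi_i\,Err(X_{n+h-i}).
\end{equation*}
Stacking $Err(X_{n+1}),\dots,Err(X_{n+h})$ and $Err(Y_{n+1}),\dots,Err(Y_{n+h})$ into block vectors, I would read off the block lower-triangular matrix $C_{ErrYtoErrX}$ with identity blocks on the diagonal. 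Because the recursion depends only on the lag $i$ and not on the absolute time index, the subdiagonal blocks depend only on the index difference, which yields the block-Toeplitz shift identity $C_{i,j}=C_{i-1,j-1}$ for $j\ge 2$; the first block column is generated by $C_{i,1}=\sum_{k=1}^{\min(p,h-1)}\Phi_kC_{i-k,1}$, with $\Phi_k$ kept on the left.

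Finally, combining $Err(\mathbf{X}_h)=C_{ErrYtoErrX}\,Err(\mathbf{Y}_h)$ with the preceding theorem $Err(\mathbf{Y}_h)\sim N_{mh}(0,\Sigma_{ErrY})$ and applying the linear-transform rule of Definition~\ref{lineartransform} delivers the claimed covariance $\Sigma_{ErrX}=C_{ErrYtoErrX}\Sigma_{ErrY}C_{ErrYtoErrX}^\top$, which expands via the previous theorem into the stated product of $C_{ZtoErrY}$ and $\Sigma_Z$. The main obstacle I anticipate is \emph{noncommutativity}: unlike the scalar case the $\Phi_i$ are matrices, so the order of factors matters, and one must verify that $P_n$ genuinely commutes with left multiplication by the constant $\Phi_i$ and keep every $\Phi_k$ on the left in the recursion for $C_{i,1}$. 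A secondary technical point is the index bookkeeping in the $h\le p$ versus $h>p$ split, ensuring that precisely the observed terms cancel; this is the step most prone to off-by-one errors.
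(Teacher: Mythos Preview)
Your proposal is correct and follows essentially the same approach as the paper's own proof: start from $X_{n+h}=Y_{n+h}+\sum_{i=1}^p\Phi_iX_{n+h-i}$, apply linearity of $P_n$ and the fixed-point property on observed data, derive the error recursion $Err(X_{n+h})=Err(Y_{n+h})+\sum_{i=1}^{\min(p,h-1)}\Phi_i\,Err(X_{n+h-i})$, stack into block form to read off $C_{ErrYtoErrX}$, and finish with Definition~\ref{lineartransform}. Your explicit attention to keeping the $\Phi_k$ on the left and to the $h\le p$ versus $h>p$ split is exactly the care the argument requires, and in fact your index handling is cleaner than the paper's (which contains a typo in the case split).
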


\begin{proof}
From the definition of $Y_t$, equation \ref{multiYtdefinition}, we have
\begin{equation}\label{multidefinitionXt}
X_{n+h}=Y_{n+h}+\sum_{i=1}^p\Phi_iX_{n+h-i}
\end{equation}

The best linear prediction of $X_{n+h}$
\begin{equation}
P_nX_{n+h}=P_n(Y_{n+h}+\sum_{i=1}^p\Phi_iX_{n+h-i})
\end{equation}

From Property \ref{PredictorPropertyAdd} we have:
\begin{equation}
P_nX_{n+h}=P_nY_{n+h}+\sum_{i=1}^p\Phi_iP_nX_{n+h-i}
\end{equation}

Introducing Property \ref{ObservedEReal}:
\begin{equation}
P_nX_{n+h}=\left\{
\begin{array}{lcl}
P_nY_{n+h}+\sum\limits_{i=1}^{h-1}\Phi_iP_nX_{n+h-i}+\sum\limits_{i=h}^p\Phi_iX_{n+h-i}&&{h<p}\\
&&\\
P_nY_{n+h}+\sum\limits_{i=1}^{p}\Phi_iP_nX_{n+h-i}&&{h\leq p}
\end{array}
\right.
\end{equation}

According to equation \ref{multidefinitionXt},
\begin{equation}
\begin{split}
Err(X_{n+h})&=P_nX_{n+h}-X_{n+h}\\
&=P_nY_{n+h}+\sum\limits_{i=1}^{h-1}\Phi_iP_nX_{n+h-i}+\sum\limits_{i=h}^p\Phi_iX_{n+h-i}\\
&-Y_{n+h}-\sum_{i=1}^p\Phi_iX_{n+h-i}\\
=&Err(Y_{n+h})+\sum\limits_{i=1}^{min(p,(h-1))}\Phi_iErr(X_{n+h-i})
\end{split}
\end{equation}

which can be represent in matrix form
\begin{equation}
Err(\mathbb{X}_h)=\begin{pmatrix}
E_{m\times m} & O_{m\times m} & O_{m\times m} & \dots & O_{m\times m}\\
C_{21} & E_{m\times m} & O_{m\times m} & \dots & O_{m\times m} \\
\vdots & \vdots & \ddots & \ddots & \vdots \\
C_{h1} & C_{h2} & C_{h3} & \dots & E_{m\times m}
\end{pmatrix}
Err(\mathbb{Y}_h)
\end{equation}
in which $C_{ij}$ is the coefficient matrix of $\sigma Err(Y_{n+i})$ when calculating $Err(X_{n+j})$.

$C_{ij}$ can be recursively given by:
For any ($i \geq 2$),
\begin{equation}
\begin{split}
C_{i,1}=\sum_{k=1}^{min(p,h-1)}\Phi_kC_{i-k,1}
\end{split}
\end{equation}

For any ($j \geq 2$),
\begin{equation}
C_{i,j}=C_{i-1,j-1}
\end{equation}

Denoting
\begin{equation}
C_{ErrYtoErrX}:=\begin{pmatrix}
E_{m\times m} & O_{m\times m} & O_{m\times m} & \dots & O_{m\times m}\\
C_{21} & E_{m\times m} & O_{m\times m} & \dots & O_{m\times m} \\
\vdots & \vdots & \ddots & \ddots & \vdots \\
C_{h1} & C_{h2} & C_{h3} & \dots & E_{m\times m}
\end{pmatrix}
\end{equation}

Using Definition \ref{lineartransform}, $Err(\mathbb{X}_h)\sim N_{mh}(0,\Sigma_{ErrX})$, where
\begin{equation}
\begin{split}
\Sigma_{ErrX}&=C_{ErrYtoErrX}\Sigma_{ErrY}C_{ErrYtoErrX}^\top\\
=&C_{ErrYtoErrX}C_{ZtoErrY}\Sigma_ZC_{ZtoErrY}^\top C_{ErrYtoErrX}^\top
\end{split}
\end{equation}

Q.E.D.
\end{proof}

With Theorem \ref{multivariatestationary}, we can calculate the joint probability of any prediction interval of any certain series of time steps $\{X_{n+1}^s,X_{n+2}^s,\dots ,X_{n+h}^s\}$, where $\{X_t\}$ is a multivariate stationary process.

\subsection{Error of Multivariate Non-Stationary Process}
\begin{definition}
The definition of a d-ordered multivariate intrinsically stationary process is exactly the same as univariate one.And the differencing method is also:
\begin{equation}
\nabla^dX_t=\sum_{k=0}^d\begin{pmatrix} d \\ k \end{pmatrix}(-1)^kX_{t-k}
\end{equation}
\end{definition}

\begin{lemma}
For any certain series $\{X_t^s\}$, we have:
\begin{equation}
\nabla^dX_t^s=\sum_{k=0}^d\begin{pmatrix} d \\ k \end{pmatrix}(-1)^kX_{t-k}^s
\end{equation}
\end{lemma}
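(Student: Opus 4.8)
The plan is to derive the componentwise identity directly from the vector-valued definition of the differencing operator by projecting the defining equation onto a single coordinate. Recall that for the multivariate process the operator was just defined by $\nabla^dX_t=\sum_{k=0}^d\begin{pmatrix} d \\ k \end{pmatrix}(-1)^kX_{t-k}$, which is an identity between $m$-dimensional vectors in which each coefficient $\begin{pmatrix} d \\ k \end{pmatrix}(-1)^k$ is a scalar and does not depend on the coordinate.

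First I would fix a coordinate index $s\in\{1,\dots,m\}$ and let $\pi_s$ denote the linear map sending a vector to its $s$-th entry, so that $X_t^s=\pi_s(X_t)$. Under the standing convention that $\nabla^d$ acts coordinatewise, I would record that $\nabla^dX_t^s:=\pi_s(\nabla^dX_t)$, i.e.\ that differencing the $s$-th series and then reading it off is the same as reading off the $s$-th entry of the differenced vector. The only point that needs a line of justification is the consistency of this reading: since $\pi_s$ is linear and the $\begin{pmatrix} d \\ k \end{pmatrix}(-1)^k$ are scalars, $\pi_s$ commutes with the finite sum and with multiplication by each coefficient.

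Applying $\pi_s$ to both sides of the vector definition then yields the claim in one step:
\[
\nabla^dX_t^s=\pi_s\!\left(\sum_{k=0}^d\begin{pmatrix} d \\ k \end{pmatrix}(-1)^kX_{t-k}\right)=\sum_{k=0}^d\begin{pmatrix} d \\ k \end{pmatrix}(-1)^k\pi_s(X_{t-k})=\sum_{k=0}^d\begin{pmatrix} d \\ k \end{pmatrix}(-1)^kX_{t-k}^s,
\]
which is exactly the stated formula.

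The step I expect to be the only substantive one is pinning down the convention in the middle paragraph, namely confirming that ``the $s$-th series differenced'' and ``the $s$-th component of the differenced vector'' denote the same object; this is a matter of notation rather than a genuine obstacle. Once it is fixed, the result is an immediate consequence of the linearity of coordinate projection and the fact that the binomial coefficients are scalars shared across all coordinates, so no further computation is required.
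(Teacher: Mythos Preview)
Your argument is correct: projecting the vector-valued definition onto the $s$-th coordinate via the linear map $\pi_s$ immediately yields the componentwise formula, and the only point to check---that $\pi_s$ commutes with the scalar-weighted finite sum---is exactly what you justify.

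As for comparison: the paper does not actually supply a proof of this lemma. It is stated immediately after the definition of multivariate $d$-th differencing and is evidently regarded as a direct restatement of that definition read in a single coordinate. Your write-up makes explicit the one-line linearity argument that the paper leaves implicit, so there is nothing further to reconcile.
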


\begin{theorem}\label{multinonstationaryjointdistribution}
The joint distribution of $\begin{pmatrix} Err(X_{n+1}) \\ Err(X_{n+2}) \\ \vdots \\ Err(X_{n+h}) \end{pmatrix}$ is a multivariate normal distribution
$$Err(\mathbb{X}_h)\sim N_{mh}(0,\Sigma_{ErrX})$$, where
\begin{equation}
\Sigma_{ErrX}=
C_{Err\nabla XtoErrX}\Sigma_{\nabla h}C_{Err\nabla XtoErrX}^\top
\end{equation}
where $\Sigma_{\nabla h}$ is the covariance matrix of errors of the differenced stationary multivariate process given by Theorem \ref{multivariatestationary} and
\begin{equation}
C_{Err\nabla XtoErrX}=\begin{pmatrix}
E_{m\times m} & O_{m\times m} & O_{m\times m} & \dots & O_{m\times m} \\
T_{21}E_{m\times m} & E_{m\times m} & O_{m\times m} & \dots & O_{m\times m} \\
\vdots & \vdots & \vdots & \dots & O\\
T_{h1}E_{m\times m} & T_{h2}E_{m\times m} & T_{h3}E_{m\times m} & \dots & E_{m\times m}
\end{pmatrix}
\end{equation}
in which $T_{ij}$ can be recursively given by:
For any ($i \geq 2$),
\begin{equation}
\begin{split}
T_{i,1}=\sum_{k=1}^{h-1}\begin{pmatrix} d \\ k \end{pmatrix}(-1)^{k+1}T_{i-k,1}
\end{split}
\end{equation}

For any ($j \geq 2$),
\begin{equation}
T_{i,j}=T_{i-1,j-1}
\end{equation}
\end{theorem}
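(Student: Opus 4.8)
The plan is to mirror the univariate argument of Theorem~\ref{nonstationaryconclusion} block-by-block, exploiting the fact that the differencing operator $\nabla^d$ acts on the vector-valued process $\{X_t\}$ with \emph{scalar} coefficients $\binom{d}{k}(-1)^k$. First I would establish the multivariate analogue of Lemma~\ref{prediction}: starting from the componentwise differencing identity $X_{n+h}=\nabla^d X_{n+h}-\sum_{k=1}^d\binom{d}{k}(-1)^k X_{n+h-k}$, apply $P_n$ and use the linearity property (Property~\ref{PredictorPropertyAdd}) together with the fact that the predictor fixes already-observed data (Corollary~\ref{ObservedEReal}) to split the sum at the index $h$. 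This yields the same two-case formula for $P_nX_{n+h}$ as in the univariate setting, now with vector quantities.

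Next I would derive the multivariate error recursion, the analogue of Lemma~\ref{ErrnablaXtoErrX}. Subtracting $X_{n+h}$ from $P_nX_{n+h}$ and cancelling the observed terms gives
\[
Err(X_{n+h})=Err(\nabla^d X_{n+h})-\sum_{k=1}^{\min(h-1,d)}\binom{d}{k}(-1)^k Err(X_{n+h-k}),
\]
which is formally identical to the univariate recursion because each coefficient is a scalar. Stacking over $i=1,\dots,h$ I would write $Err(\mathbb{X}_h)=C_{Err\nabla XtoErrX}\,Err(\nabla^d\mathbb{X}_h)$, and here the scalar nature of the coefficients forces every block of $C_{Err\nabla XtoErrX}$ to be of the form $T_{ij}E_{m\times m}$. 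The diagonal blocks are $E_{m\times m}$ and the off-diagonal scalars $T_{ij}$ satisfy the stated recursion: solving the stacked equations for the first column gives $T_{i,1}=\sum_{k=1}^{h-1}\binom{d}{k}(-1)^{k+1}T_{i-k,1}$ (the upper limit may equivalently be taken as $\min(h-1,d)$ since $\binom{d}{k}=0$ for $k>d$), while the Toeplitz shift $T_{i,j}=T_{i-1,j-1}$ follows because the recursion depends only on the lag $h-k$.

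Finally I would invoke the multivariate linear-transform lemma (Definition~\ref{lineartransform}). By Theorem~\ref{multivariatestationary}, the differenced process $\nabla^d X_t$ is a zero-mean multivariate stationary ARMA process, so $Err(\nabla^d\mathbb{X}_h)\sim N_{mh}(0,\Sigma_{\nabla h})$ with $\Sigma_{\nabla h}$ as computed there. Applying the affine-normal closure to the zero-mean transform $Err(\mathbb{X}_h)=C_{Err\nabla XtoErrX}\,Err(\nabla^d\mathbb{X}_h)$ immediately gives $Err(\mathbb{X}_h)\sim N_{mh}(0,\Sigma_{ErrX})$ with $\Sigma_{ErrX}=C_{Err\nabla XtoErrX}\Sigma_{\nabla h}C_{Err\nabla XtoErrX}^\top$.

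I expect the only point requiring genuine care, rather than a true obstacle, to be the verification that the differenced multivariate process meets the hypotheses of Theorem~\ref{multivariatestationary} — that $\nabla^d X_t$ is indeed a zero-mean multivariate stationary ARMA$(p,q)$ process, so that $\Sigma_{\nabla h}$ is well-defined — together with the bookkeeping confirming that each block of $C_{Err\nabla XtoErrX}$ is a scalar multiple of the identity. This last fact is what guarantees that the covariance formula does not introduce any spurious mixing across the $m$ components beyond that already present in $\Sigma_{\nabla h}$.
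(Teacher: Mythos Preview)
Your proposal is correct and follows exactly the route the paper intends. In fact, the paper does not supply an explicit proof for Theorem~\ref{multinonstationaryjointdistribution}; it states the result and moves on, relying on the reader to transplant the univariate argument of Theorem~\ref{nonstationaryconclusion} (via Lemmas~\ref{prediction} and~\ref{ErrnablaXtoErrX}) to the multivariate setting. Your outline---multivariate predictor formula, error recursion with scalar differencing coefficients, block-Toeplitz matrix $C_{Err\nabla XtoErrX}$ with blocks $T_{ij}E_{m\times m}$, and closure under the affine-normal transform of Definition~\ref{lineartransform}---is precisely that transplant, and your observation that the scalar nature of the $\binom{d}{k}(-1)^k$ coefficients forces every block to be a scalar multiple of the identity is the one point that distinguishes the multivariate case and justifies the stated form of $C_{Err\nabla XtoErrX}$.
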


Now we can calculate the joint guarantee of any prediction interval of a certain series of time steps $\{X_{n+1}^s,X_{n+2}^s,\dots ,X_{n+h}^s\}$, when
$\{X_t\}$ is a multivariate intrinsically stationary process.\section{Conclusion}
We use the property of both prediction and modeling of ground truth value are linear combination of white noise, which is the assumption of ARIMA model, to prove that the prediction error: difference between ground truth value and predicted value, are also linear combination of white noise, thus prediction errors among different time steps is eligible to form a multivariate distribution and the according expression for different scenarios is in Theorem \ref{stationaryconclusion}, \ref{nonstationaryconclusion}, \ref{multivariatestationary}, \ref{multinonstationaryjointdistribution}.

\bibliographystyle{plain}
\bibliography{main.bib}{}

\begin{thebibliography}{1}

\bibitem{StatTexBook2}
Peter~J. Brockwell and Richard~A. Davis.
\newblock Time series: Theory and methods.
\newblock {\em Springer-Verlag New York}, 1987.

\bibitem{StatTexBook}
Peter~J. Brockwell and Richard~A. Davis.
\newblock Introduction to time series and forecasting.
\newblock {\em Springer, Cham}, 2016.

\end{thebibliography}

\end{document}